\newcommand{\revise}[1]{{#1}}
\newcommand{\expect}[1]{\mathbb{E}\left[#1\right]}
\newtheorem{theorem}{Theorem}
\newtheorem{lemma}{Lemma}
\newtheorem{definition}{Definition}
\newtheorem{assumption}{Assumption}
\newtheorem{corollary}{Corollary}
\newtheorem{remark}{Remark}
\newtheorem{Alg}{Algorithm}
\begin{document}
\title{Data Center Server Provision: Distributed Asynchronous Control for Coupled Renewal Systems}
\author{\authorblockN{Xiaohan Wei and Michael J. Neely, \textit{Senior Member, IEEE}}
\thanks{The authors are with the  Electrical Engineering department at the University of Southern California, Los Angeles, CA.}
}
\maketitle

\begin{abstract}
This paper considers a cost minimization problem for data centers with $N$ servers and randomly arriving service requests.  A central router decides which server to use for each new request.  Each server has three types of states (active, idle, setup) with different costs and time durations.  The servers operate asynchronously over their own states and can choose one of multiple sleep modes when idle. We develop an online distributed control algorithm so that each server makes its own decisions, the request queues are bounded and the overall time average cost is near optimal with probability 1.  The algorithm does not need probability information for the arrival rate or job sizes. Next, an improved algorithm that uses a single queue is developed via a ``virtualization'' technique which is shown to provide the same (near optimal) costs. Simulation experiments on a real data center traffic trace demonstrate the efficiency of our algorithm compared to other existing algorithms.
\end{abstract}

\section{Introduction}
\subsection{Overview}
Consider a data center that consists of a central controller and $N$ servers that serve randomly arriving requests. The system operates in slotted time with time
slots $t \in \{0, 1, 2, \ldots\}$.  Each server $n \in \{1, \ldots, N\}$ has three basic states:
\begin{itemize}
  \item Active: The server is available to serve requests. Server $n$ incurs a cost of
  $e_n \geq 0$ on every active slot, regardless of whether or not requests are available to serve. \revise{In data center applications, such cost often represents the power consumption of each individual server.} 

  \item Idle: A low cost sleep state where no requests can be served. The idle state is actually
  comprised of a choice of multiple sleep modes with different per-slot costs. The specific sleep mode also affects the setup time required to transition from the idle state to the active state. For the rest of the paper, we use ``idle'' and ``sleep'' exchangeably.
  \item Setup: A transition period from idle to active during which no requests can be served.
  The setup cost and duration depend on the preceding sleep mode.   The setup duration is typically more than one slot, and can be a random variable that depends on the server $n$ and on the
  preceding sleep mode.
 \end{itemize}

 An active server can choose to transition to the idle state at any time.  When it does
so, it chooses the specific sleep mode to use and the amount of time to sleep.
For example, deeper sleep modes can shut down more electronics and thereby save
on per-slot idling costs.  However, a deeper sleep incurs a longer setup time when
transitioning back to the active state.  Each server makes separate
decisions about when to transition and what sleep mode to use.  The resulting
transition times for each server are asynchronous.   On top of this, a central controller
makes slot-wise decisions for routing requests to servers.  It can also reject requests
(with a certain amount of cost) if it decides they cannot be
supported. The goal is to minimize the overall time average cost.

This  problem is challenging mainly for two reasons: First, since each setup state generates cost but serves no request, it is not clear whether or not transitioning to idle from the
active state indeed saves power.
It is also not clear which sleep mode the server should switch to. Second, if one server is currently in a setup state, it cannot make another decision until it reaches the active state (which typically takes more than one slot), whereas other active servers can make decisions during this time.  Thus, this problem  can be viewed as a system with coupled Markov decision processes (MDPs) making decisions asynchronously. Classical methods for MDPs,  such as dynamic programming and linear programming \cite{dynamic_programming}\cite{puterman}\cite{ross-prob},  can be
impractical for two reasons:
First, the state space has dimension that depends on the number of servers, making solutions difficult  when the number of servers is large.  Second, some statistics of the system, such as the arrival probabilities, are unknown.

\subsection{Related works}
In this paper, we use renewal-reward theory (e.g. \cite{renewal_reward}) together with Lyapunov optimization (e.g. \cite{Lyapunov_optimization} and \cite{Lyapunov_book}) to develop a simple implementable algorithm for this problem. Our work is not alone in approaching the problem this way.
Work in \cite{frame_algorithm} uses Lyapunov theory for a system with one renewal server, while
 work in  \cite{deterministic_algorithm} considers a multi-server system but in a deterministic context. The work in \cite{two_stage_algorithm} develops a two stage algorithm for stochastic
 multi-server systems, but the first stage must be solved offline. This paper is distinct in that we overcome the open challenges of  previous work and develop a near optimal fully online algorithm for a stochastic system with multiple servers.

Several alternative approaches to multi-server systems use queueing theory.
Specifically,
the work in \cite{mmk_queueing_system} treats the multi-server system as an $M/M/k/setup$ queue, whereas the work in \cite{c_mu_rule} considers modeling a
single server system as a multi-class $M/G/1$ queue.  These approaches require Poisson traffic and do not treat the asynchronous control problem with multiple sleep options considered in the current paper.

Experimental work on power and delay minimization is treated in  \cite{Gandhi_phd_thesis}, which
proposes to turn each server ON and OFF according to the rule of an $M/M/k/setup$ queue. The work in \cite{virtual_data_center} applies Lyapunov optimization to optimize power in
virtualized data centers.  However, it assumes each server has negligible setup time and that
ON/OFF decisions are made synchronously at each server.
The works  \cite{across_data_centers}, \cite{right_sizing}
focus on power-aware provisioning over a time scale large enough so that the whole data center can adjust its service capacity. Specifically,
\cite{across_data_centers} considers load balancing across geographically distributed data centers,
and \cite{right_sizing} considers provisioning over a finite time interval and introduces an online 3-approximation algorithm.

\revise{
Prior works \cite{sleep-multi-mode, sleep-multi-mode-2, sleep-multi-mode-3}
consider servers with multiple hypothetical sleep states with different level of power consumption and setup times. Although empirical evaluations in these works show significant power saving by introducing sleep states, they are restricted to the scenario where the setup time from sleep to active is on the order of milliseconds, which is not realistic for today's data center. 
Realistic sleep states with setup time on the order of seconds are considered in \cite{sleep-mode-2}, where effective heuristic algorithms are proposed and evaluated via extensive testbed simulations.
However, little is known about the theoretical performance bound regarding these algorithms. 
In this paper, we propose a new algorithm which incorporates servers with different sleep modes while having a provable performance guarantee.}

\subsection{Contributions}
On the theoretical side, the current paper is the first to consider the stochastic control
problem with heterogeneous servers and multiple idle and setup states at each server.
This gives rise to a nonstandard asynchronous problem with coupled Markov decision
systems.
A novel aspect of the solution is the construction of a process with super-martingale properties by piecing together the asynchronous processes at each server.  This is interesting because neither the individual server processes nor their asynchronous sum are super-martingales. The technique yields a simple distributed algorithm that can likely be applied more broadly for other
coupled stochastic MDPs.

\revise{
On the practical side, we run our algorithm on a real data center traffic trace with server setup time on the order of seconds. Simulation experiments show that our algorithm is effective compared to several existing algorithms. 
}
%The structure of the paper is as follows: Section \ref{section_problem_formulation} defines the multi-server system and formulates our cost minimization problem. Section \ref{section_iid_algorithm} introduces a benchmark algorithm from dynamic programming which is theoretically important but practically impossible to implement. Section \ref{section_proposed_algorithm} proposes our online asynchronous control algorithm. Section \ref{section_performance_analysis} proves both the feasibility and near optimality of the proposed algorithm. Section \ref{section_data_center} implements the proposed algorithm to data center server provision. Finally, section \ref{section_simulation} demonstrates the the performance of the algorithm through simulation experiments.

\section{Problem formulation}\label{section_problem_formulation}
Consider a slotted time system with  $N$ servers, denoted by set $\mathcal{N}$,  that
serve randomly incoming requests.

\subsection{Front-end load balancing}
At each time slot $t \in \{0, 1, 2, \ldots\}$, $\lambda(t)$ new requests arrive at the system. We assume $\lambda(t)$ takes values in a finite set $\Lambda$.
Let  $R_n(t),~n\in\mathcal{N}$ denote the number of requests routed into server $n$ at time $t$.
In addition, the system is allowed to reject requests. Let $d(t)$ be the number of requests that are
rejected on slot $t$, and let $c(t)$ be the corresponding per-request cost for such rejection.
Assume $c(t)$ takes values in a finite state space $\mathcal{C}$. The $R_n(t)$ and $d(t)$ decision variables on slot $t$ must be nonnegative integers that satisfy:
 \begin{align*}
 &\sum_{n=1}^NR_n(t)+d(t)=\lambda(t)\\
 &\sum_{n=1}^N R_n(t)\leq R_{\max}
  \end{align*}
 for a given  integer $R_{max}>0$.  The vector process $(\lambda(t), c(t))$ takes values in $\Lambda \times \mathcal{C}$ and is assumed to be an independent and identically distributed (i.i.d.) vector over slots $t \in \{0, 1, 2, \ldots\}$ with an unknown probability mass function.

 Each server $n$ maintains a request queue $Q_n(t)$ that stores the requests that are
 routed to it.  Requests are served in  a FIFO manner with queueing dynamics as follows:
\begin{equation}\label{queue_update}
Q_n(t+1)=\max\left\{Q_n(t)+R_n(t)-\mu_n(t)H_n(t),~0\right\}.
\end{equation}
where $H_n(t)$ is an indicator variable that is 1 if server $n$ is active on slot $t$, and $0$ else, and $\mu_n(t)$ is a random variable that represents the number of requests can be
served on slot $t$.  Each queue is initialized to $Q_n(0)=0$.
Assume that, every slot in which server $n$ is active, $\mu_n(t)$ is independent and identically distributed with a known mean $\mu_n$. This randomness can model variation in job sizes.

\begin{assumption}\label{observability}
The process $\{(\lambda(t), c(t))\}_{t=0}^{\infty}$ is observable, i.e. the router
can observe the $(\lambda(t),c(t))$  realization each time slot $t$ before making decisions. In contrast, the process $\{\mu_n(t)\}_{t=0}^{\infty}$ is not observable, i.e. given that $H_n(t)=1$, at the beginning of slot $t$ server $n$ knows a random service will take place, but it does not know the
realization of $\mu_n(t)$ until the end of slot $t$. \revise{Moreover, $\lambda(t),~c(t)$ and $\mu_n(t)$ are all bounded.}
\end{assumption}

\subsection{Server model}
Each server $n\in \mathcal{N}$ has three types of states: active, idle, and setup.
The idle state of each server $n$ is further decomposed into a collection of distinct sleep modes.  Each server $n \in \mathcal{N}$ makes decisions over its own \emph{renewal frames}. Define the renewal frame for server $n$ as the time period between successive visits to active state (with each renewal period ending in an active state). Let $T_n[f]$ denote the frame size of the $f$-th renewal frame for server $n$, for  $f \in \{0, 1, 2, \ldots\}$. Let $t^{(n)}_f$ denote the start of frame $f$, so that
 $T_n[f]=t^{(n)}_{f+1}-t^{(n)}_f$.  Assume that $t^{(n)}_0=0$ for all $n \in \mathcal{N}$, so that time slot $0$ the start of the first renewal frame (labeled frame $f=0$) for all servers.  For simplicity, assume all servers are ``active'' on slot $t=-1$. Thus, the slot just before each renewal frame is an active slot.
Fig. \ref{fig:single-system} illustrates this renewal frame construction.

\begin{figure}[htbp]
   \centering
   \includegraphics[height=1in]{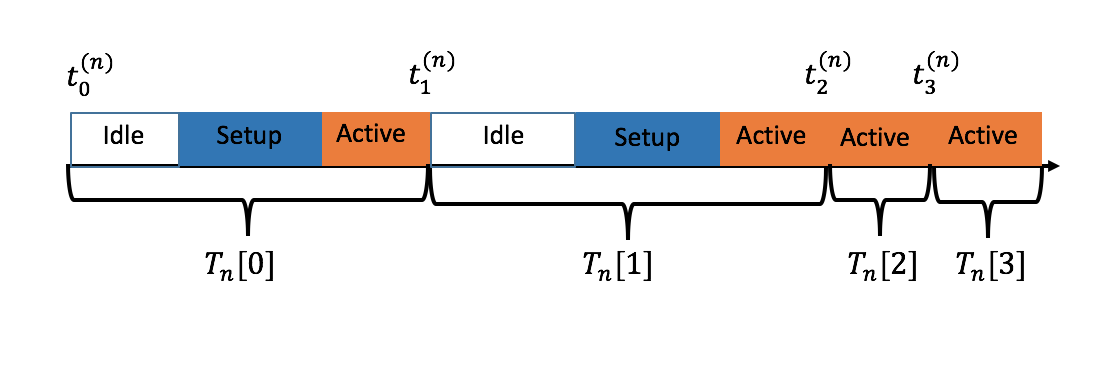} % requires the graphicx package
   \caption{Illustration of a typical renewal frame construction, where $T_n[i]$ is the length of frame $i$ and $t^{(n)}_i$ is the start slot of frame $i$.}
   \label{fig:single-system}
\end{figure}

 Fix a server $n \in \mathcal{N}$ and a frame index $f \in \{0, 1,2, \ldots\}$.  Time $t_f^{(n)}$ marks the start of renewal frame $f$. At this time, server $n$  must decide whether to remain active or to
go idle. If it remains active then the renewal frame lasts for one slot, so that $T_n[f]=0$.   If it goes idle, it chooses an idle mode from a finite set $\mathcal{L}_n$, representing the set of idle modes options. Let $\alpha_n[f]$ represent this initial decision for server $n$ at the start of frame $f$, so that:
\[ \alpha_n[f] \in \{active\}\cup\mathcal{L}_n \]
where $\alpha_n[f] =active$ means the server chooses to remain active.  If the server chooses to go idle, so that $\alpha_n[f] \in \mathcal{L}_n$, it then chooses a variable $I_n[f]$ that represents \emph{how much time it remains idle}. The
decision variable $I_n[f]$ is chosen as an integer
in the set $\{1, \ldots, I_{max}\}$ for some given integer $I_{max}>0$.  The consequences of these decisions are described below.

\begin{itemize}
\item Case $\alpha_n[f]=active$.  The frame starts at time $t_f^{(n)}$ and
has size $T_n[f]=1$. The active variable becomes $H_n(t_f^{(n)})=1$ and an activation cost of $e_n$ is incurred on this slot $t_f^{(n)}$. A random service variable $\mu_n(t_f^{(n)})$ is generated and requests are served according to the queue update \eqref{queue_update}. Recall that, under Assumption \ref{observability}, the  value of $\mu_n(t)$ is not known until the end of the slot.

\item Case $\alpha_n[f] \in \mathcal{L}_n$.  In this case, the server chooses to go idle and $\alpha_n[f]$ represents the specific sleep mode chosen.  The idle duration $I_n[f]$ is also chosen as an integer in the set $[1, I_{max}]$.  After the idle duration completes, the setup duration starts and has an independent and  random
duration $\tau_n[f] = \hat{\tau}(\alpha_n[f])$, where $\hat{\tau}(\alpha_n[f])$ is an integer random variable with a  known mean and variance that depends on the sleep mode $\alpha_n[f]$.
At the end of the setup time the system goes active and serves with a random $\mu_n(t)$ as before.
The active variable is $H_n(t)=0$ for all slots $t$ in the idle and setup times, and is $1$ at the very last slot of the frame.
Further:
\begin{itemize}
\item Idle cost:  Every slot $t$ of the idle time, an idle cost of $g_n(t) = \hat{g}_n(\alpha_n[f])$ is incurred (so that the idle cost depends on the sleep mode).  The process $g_n(t)=0$ if server $n$ is not idle on slot $\tau$.  The idle cost can be zero, but can also be a small but positive value if some electronics are still running in the sleep mode chosen.

\item Setup cost: Every slot $t$ of the setup time,
a cost of $W_n(t)=\hat{W}_n(\alpha_n[f])$ is incurred. The process $W_n(t)=0$ if server $n$ is not in a setup duration on slot $\tau$.
\end{itemize}
\end{itemize}

Thus, the length of frame $f$ for server $n$ is:
\begin{equation}\label{frame_length}
T_n[f]=\left\{
         \begin{array}{ll}
           1, & \hbox{if $\alpha_n[f]=active$;}\\
           I_n[f]+\tau_n[f]+1, & \hbox{if $\alpha_n[f]\in\mathcal{L}_n$.}
         \end{array}
       \right.
\end{equation}
In summary, the costs $\hat{g}_n(\alpha_n)$, $\hat{W}_n(\alpha_n)$ and the setup time $\hat{\tau}_n(\alpha_n)$ are functions of $\alpha_n\in\mathcal{L}_n$. We further make the following assumption regarding $\hat{\tau}_n(\alpha_n)$:

\begin{assumption}\label{bounded_moment_assumption}
For any $\alpha_n\in\mathcal{L}_n$, the function $\hat{\tau}_n(\alpha_n)$ is an integer random variable with known mean and variance, as well as bounded first four moments. Denote $\expect{\tau_n(\alpha_n)}=m_{\alpha_n}$ and $\textrm{Var}[\tau_n(\alpha_n)]=\sigma_{\alpha_n}^2$.
\end{assumption}
\revise{Note that this is a very mild assumption in view of the fact that the setup time of a real server is always bounded. The motivation behind emphasizing the fourth moment here instead of simply proceeding with boundedness assumption is more of theoretical interest than practical importance.
}

\revise{
Table I summarizes the parameters introduced in this section. The data center architecture is shown is Fig. \ref{fig:Stupendous0}. Since different servers might make different decisions, the renewal frames are not necessarily aligned. A typical asynchronous timeline for a system of three servers is illustrated in Fig. \ref{fig:multi-system}.
}

\begin{table}
\begin{center}
\caption{Parameters}
\begin{tabular}{|l|l|}
  \hline
  % after \\: \hline or \cline{col1-col2} \cline{col3-col4} ...
  Control parameters & Control objectives \\
  \hline
  $R_n(t)$ & Requests routed to server $n$ at slot $t$ \\ 
  $d(t)$ & Requests rejected at slot $t$ \\
  $\alpha_n[f]$ & The option (active/idle) server $n$ takes in frame $f$ \\
  $I_n[f]$ & Number of slots server $n$ stays idle in frame $f$ \\
  \hline
  Other parameters & Meaning\\
  \hline
  $\lambda(t)$ & Number of arrivals at time $t$\\
  $c(t)$ & Per request rejection cost at time $t$\\
  $e_n$ & Per slot active service cost for server $n$\\
  $T_n[f]$ & The length of frame $f$ for server $n$\\
  $t^{(n)}[f]$ & Starting slot of frame $f$ for server $n$\\
  $\tau_n[f]$ & Setup duration in frame $f$\\
  $\mu_n(t)$ & Number of requests served on server $n$ at time $t$\\
  $H_n(t)$ & Server active indicator (equal to 1 if active, 0 if not)\\
  $g_n(t)$ & Idle cost of server $n$ at time $t$\\
  $W_n(t)$ & Setup cost of server $n$ at time $t$\\
  \hline
\end{tabular}
\end{center}
\end{table}
\begin{figure}[htbp]
   \centering
   \includegraphics[height=2in]{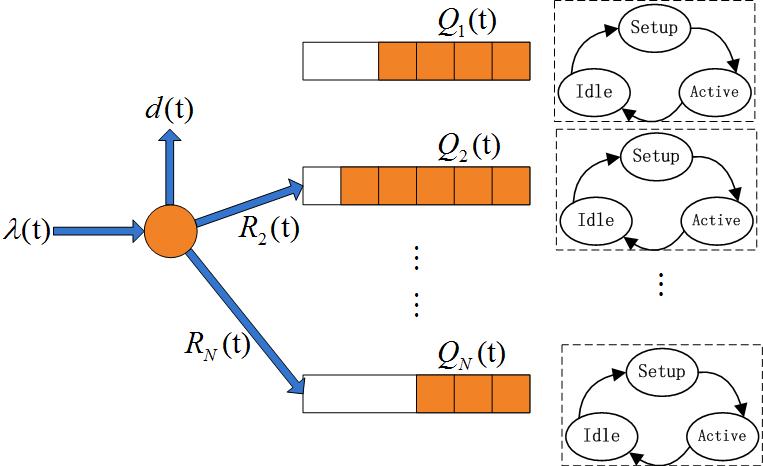} % requires the graphicx package
   \caption{\textbf{Illustration of a data center structure which contains a front-end load balancer, $N$ application servers with $N$ request queues and a backend database (omitted here for brevity)}.}
   \label{fig:Stupendous0}
\end{figure}

\begin{figure}[htbp]
   \centering
   \includegraphics[height=1in]{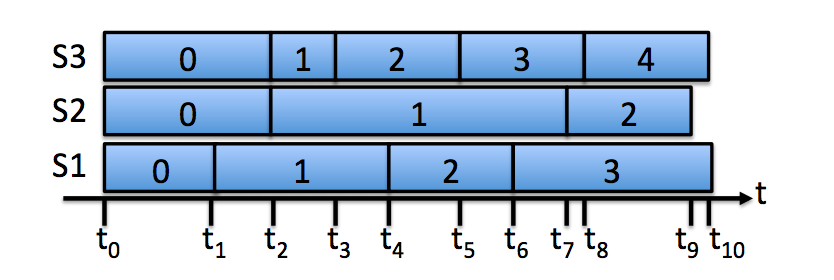} % requires the graphicx package
   \caption{Illustration of a typical asynchronous timeline for a system of three servers, where we count the number of renewal frames per single server.}
   \label{fig:multi-system}
\end{figure}

\subsection{Performance Objective}

For each $n \in \mathcal{N}$,
let $\overline{C}$, $\overline{W}_n$, $\overline{E}_n$, $\overline{G}_n$ be the time average costs resulting from rejection, setup, service and idle, respectively. They are defined as follows:
$\overline{C}=\lim_{T\rightarrow\infty}\frac{1}{T}\sum_{t=0}^{T-1}d(t)c(t)$,
$\overline{W}_n=\lim_{T\rightarrow\infty}\frac{1}{T}\sum_{t=0}^{T-1}W_n(t)$,
$\overline{E}_n=\lim_{T\rightarrow\infty}\frac{1}{T}\sum_{t=0}^{T-1}e_nH_n(t)$,
$\overline{G}_n=\lim_{T\rightarrow\infty}\frac{1}{T}\sum_{t=0}^{T-1}g_n(t)$.
%In the case when above limits are not guaranteed to exist, we replace $\lim$ with $\limsup$.

The goal is to design a joint routing and service policy so that the time average overall cost is minimized and all queues are stable, i.e.
\begin{align}\label{obj_1}
\min~\overline{C}+\sum_{n=1}^N\left(\overline{W}_n+\overline{E}_n+\overline{G}_n\right),~
\textrm{s.t.}~Q_n(t)~\textrm{stable } \forall n.
\end{align}
Notice that the constraint in \eqref{obj_1} is not easy to work with. In order to get an optimization problem one can deal with, we further define the time average request rate, rejection rate, routing rate and service rate as $\overline{\lambda}$, $\overline{d}$, $\overline{R}_n$, and $\overline{\mu}_n$ respectively:
$\overline{\lambda}=\lim_{T\rightarrow\infty}\frac{1}{T}\sum_{t=0}^{T-1}\lambda(t)$,
$\overline{d}=\lim_{T\rightarrow\infty}\frac{1}{T}\sum_{t=0}^{T-1}d(t)$,
$\overline{R}_n=\lim_{T\rightarrow\infty}\frac{1}{T}\sum_{t=0}^{T-1}R_n(t)$,
$\overline{\mu}_n=\lim_{T\rightarrow\infty}\frac{1}{T}\sum_{t=0}^{T-1}\mu_n(t)H_n(t)$.

Then, rewrite the problem \eqref{obj_1} as follows
\begin{align}
\min~~&\overline{C}+\sum_{n=1}^N\left(\overline{W}_n+\overline{E}_n+\overline{G}_n\right) \label{obj_3}\\
\textrm{s.t.}~~&\overline{R}_n\leq\overline{\mu}_n,~\forall n\in\mathcal{N}\label{obj_4}\\
&\sum_{n=1}^NR_n(t)\leq R_{\max},~\sum_{n=1}^NR_n(t)+d(t)=\lambda(t)~\forall t \label{obj_5}
\end{align}
Constraint \eqref{obj_4} and \eqref{obj_5} suggest that the time average request arrival rate is no greater than the time average total service rate.
We aim to develop an algorithm so that each server can make its own decision (without looking at the workload or service decision of any other server) and prove its near optimality.

\section{An online control algorithm}\label{section_proposed_algorithm}
We now present an online control algorithm which makes the joint request routing and service decisions as functions of $(\lambda(t), c(t), \mathbf{Q}(t))$, where $\mathbf{Q}(t)=(Q_1(t), \ldots, Q_N(t))$. The algorithm is inherited from the Lyapunov optimization technique and features a trade-off parameter $V>0$. 
\revise{In view of the fact that the traditional Lyapunov optimization technique does not require the statistics of the arrivals in a stochastic network optimization problem(see \cite{Lyapunov_optimization, Lyapunov_book} for details), our proposed algorithm does not need the statistics of $\lambda(t)$, $c(t)$ either and yields a simple distributed implementation.} 
%Specifically, each server makes frame-based decisions regarding its own queue state $Q_n(t)$, whereas the system makes routing and rejection decisions slot-wise based on $(\lambda(t), c(t), \mathbf{Q}(t))$.

\subsection{Intuition on the algorithm design}
Recall that our goal is to make routing and service decisions so as to solve the optimization problem \eqref{obj_3}-\eqref{obj_5}. First of all, from the queueing model described in the last section and Fig. \ref{fig:Stupendous0}, it is intuitive that an efficient algorithm would have
each server make decisions regarding its own queue state $Q_n(t)$, whereas the front-end load-balancer make routing and rejection decisions slot-wise based on the global information $(\lambda(t), c(t), \mathbf{Q}(t))$.

Next, to get an idea on what exactly the decision should be, 
by virtue of Lyapunov optimization, one would introduce a trade-off parameter $V>0$ and penalize the time average constraint \eqref{obj_4} via $\mathbf{Q}(t)$ to solve the following slotwise optimization problem
\begin{align}
\min~~&V\left(c(t)d(t)+\sum_{n=1}^N\left(W_n(t)+e_nH_n(t)+g_n(t)\right)\right) \label{obj-temp}\\
&+\sum_{n=1}^NQ_n(t)(R_n(t)-\mu_n(t))
~~\textrm{s.t.}~~\textrm{constraint}~\eqref{obj_5},\nonumber
\end{align}
which is naturally separable regarding the load-balancing decision ($d(t)$, $R_n(t)$), and the service decision ($W_n(t),~ H_n(t),~ g_n(t),~ \mu_n(t)$).
However, because of the existence of setup state (on which no decision could be made), the server does not have identical decision set every slot and furthermore, the decision set itself depends previous decisions. This poses a significant difficulty analyzing the above optimization \eqref{obj-temp}.
%\footnote{It would be very interesting if one can get a theoretical performance bound by simply doing \eqref{obj-temp} slotwise.}. 

In order to resolve this difficulty, we try to find the smallest ``identical time unit'' for each individual server in lieu of slots. This motivates the notion of renewal frame in the previous section (see Fig. \ref{fig:single-system}). Specifically, from Fig. \ref{fig:single-system} and the related renewal frame construction, at the starting slot of each renewal, the server faces the identical decision set (remain active or go to idle with certain slots) regardless of what previous decisions are. Following this idea, we modify \eqref{obj-temp} as follows:
\begin{itemize}
\item For the front-end load balancer, we observe $(\lambda(t), c(t), \mathbf{Q}(t))$ and solve $min~Vc(t)d(t)+\sum_{n=1}^NQ_n(t)R_n(t),~s.t. ~\eqref{obj_5}$, which is detailed in Section \ref{front-section}.
\item For each server, instead of per slot optimization $\min~V(W_n(t)+e_nH_n(t)+g_n(t))-Q_n(t)\mu_n(t)$, we propose to minimize the time average of this quantity per renewal frame $T_n[f]$, which is \eqref{server_decision} in Section \ref{server-section}.
\end{itemize}

\subsection{Thresholding algorithm on admission and routing}\label{front-section}
During each time slot, the data center chooses integers $R_n(t)$ and $d(t)$ to solve the following optimization problem:
\begin{align}
\min&~~Vd(t)c(t)+\sum_{n=1}^NQ_n(t)R_n(t) \label{front_end_routing}\\
\textrm{s.t.}&~~\sum_{n=1}^NR_n(t)+d(t)=\lambda(t),~0\leq \sum_{n=1}^NR_n(t)\leq R_{\max}. \nonumber
\end{align}
Notice that the solution of this problem admits a simple thresholding rule, i.e.
\begin{equation}\label{reject_decision}
d(t)=\left\{
       \begin{array}{lll}
         \max\{\lambda(t)-R_{\max},~0\},&\hbox{if $\exists n\in\mathcal{N}$ s.t.} \\
         &\hbox{ $Q_n(t)\leq Vc(t)$;}\\
         \lambda(t),&\hbox{otherwise.}
       \end{array}
     \right.
\end{equation}
\begin{equation}\label{routing_decision}
R_n(t)=\left\{
         \begin{array}{lll}
           \min\{\lambda(t), R_{\max}\}, & \hbox{if $Q_n(t)$ is the shortest } \\
                                         &\hbox{queue and $Q_n(t)\leq Vc(t)$;}\\
           0, & \hbox{otherwise.}
         \end{array}
       \right.
\end{equation}

\subsection{Ratio of expectation minimization on service decisions}\label{server-section}
Each server makes its own frame-based service decisions. Specifically, for server $n$, at the beginning of its $f$-th renewal frame $t_f^{(n)}$, it observes its current queue state $Q(t_f^{(n)})$ and makes decisions on $\alpha_n[f]\in\{active\}\cup\mathcal{L}_n$ and $I_n[f]$ so as to solve the minimization of ratio of expectations in \eqref{server_decision},
where $B_0=\frac{1}{2}(R_{\max}+\mu_{\max})\mu_{\max}$. Recall the definition of $T_n[f]$, if the server chooses to remain active, then the frame length is exactly 1, otherwise, the server is allowed to choose how long it stays in idle. It can be easily shown that over all randomized decisions between staying active and going to different idle states, it is optimal to make a pure decision which either stays active or goes to one of the idle states with probability 1.
Thus, we are able to simplify the problem by computing $D_n[f]$ for active and idle options separately.
\begin{itemize}
  \item If the server chooses to go active, i.e. $\alpha_n[f]=active$, then,
\begin{equation}\label{DPP_active}
D_n[f]=Ve_n-Q_n(t_f^{(n)})\mu_n.
\end{equation}
  \item If the server chooses to go idle, i.e. $\alpha_n[f]\in\mathcal{L}_n$, then, $D_n[f]$ is shown in \eqref{DPP_idle_origin},
which follows from the fact that if the server goes idle, then, $H_n(t)$ are all zero during the frame except for the last slot. Now we try to compute the optimal idle option $\alpha_n[f]\in\mathcal{L}_n$ and idle time length $I_n[f]$ given the server chooses to go idle. The following lemma illustrates that the decision on $I_n[f]$ can also be reduced to pure decision.
\begin{lemma}\label{compute_idle}
The best decision minimizing \eqref{DPP_idle_origin} is a pure decision which takes one $\alpha_n[f]\in\mathcal{L}_n$ and one integer value $I_n[f]\in\left\{1,\cdots,I_{\max}\right\}$ minimizing the deterministic function \eqref{DPP_idle}.
\end{lemma}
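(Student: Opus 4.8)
The plan is to show that both the numerator and the denominator of the ratio-of-expectations \eqref{DPP_idle_origin} are \emph{affine} functionals of the probability distribution that an arbitrary (possibly randomized) idle policy places on the finite pure-action set, and then to invoke the elementary mediant inequality $\frac{\sum_i p_i a_i}{\sum_i p_i b_i}\ge\min_i\frac{a_i}{b_i}$ (valid whenever $p_i\ge 0$, $\sum_i p_i=1$, $b_i>0$) to conclude that no randomization can beat the best pure action.

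First I would fix the pure-action set. Conditioned on server $n$ going idle at the start of frame $f$, a pure action is a pair $(\alpha_n,I_n)\in\mathcal{L}_n\times\{1,\dots,I_{\max}\}$, chosen using only the observed queue state $Q_n(t_f^{(n)})$; the setup duration $\tau_n[f]=\hat\tau_n(\alpha_n)$ is generated afterward and is independent of everything known at $t_f^{(n)}$. For such a pure action, $T_n[f]=I_n+\tau_n[f]+1$ with $H_n(t)=1$ only on the terminal slot of the frame, so taking the conditional expectation and using $\expect{\hat\tau_n(\alpha_n)}=m_{\alpha_n}$ (and, where needed, $\textrm{Var}[\hat\tau_n(\alpha_n)]=\sigma_{\alpha_n}^2$) from Assumption \ref{bounded_moment_assumption}, the numerator of \eqref{DPP_idle_origin} collapses to the deterministic quantity obtained by replacing each occurrence of $\tau_n[f]$ by its mean $m_{\alpha_n}$, and the denominator collapses to $\expect{T_n[f]}=I_n+m_{\alpha_n}+1\ge 2>0$. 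In other words, evaluated at a pure action the ratio of expectations in \eqref{DPP_idle_origin} is exactly the deterministic objective \eqref{DPP_idle}; write its value as $\rho(\alpha_n,I_n)=a(\alpha_n,I_n)/b(\alpha_n,I_n)$ with $b(\alpha_n,I_n)\ge 2$.

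Next I would treat a general randomized idle policy. Since the only information available at $t_f^{(n)}$ is $Q_n(t_f^{(n)})$ and the action is committed before $\tau_n[f]$ is revealed, any such policy is described by a probability vector $\{p_{\alpha_n,I_n}\}$ over the finite set $\mathcal{L}_n\times\{1,\dots,I_{\max}\}$. By linearity of expectation, the numerator of \eqref{DPP_idle_origin} under this policy equals $\sum_{\alpha_n,I_n}p_{\alpha_n,I_n}\,a(\alpha_n,I_n)$ and its denominator equals $\sum_{\alpha_n,I_n}p_{\alpha_n,I_n}\,b(\alpha_n,I_n)$. The mediant inequality then yields
\[
\frac{\sum_{\alpha_n,I_n}p_{\alpha_n,I_n}\,a(\alpha_n,I_n)}{\sum_{\alpha_n,I_n}p_{\alpha_n,I_n}\,b(\alpha_n,I_n)}\;\ge\;\min_{\alpha_n,I_n}\rho(\alpha_n,I_n),
\]
and the minimum on the right is attained since the action set is finite; the attaining pair is the claimed pure minimizer of \eqref{DPP_idle}, and its deterministic value coincides with the optimal value of \eqref{DPP_idle_origin}.

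The only points requiring care are (i) the affine/linearity representation — i.e. that the conditional expectation in \eqref{DPP_idle_origin} factors through the action distribution — which rests on the independence of $\hat\tau_n(\alpha_n)$ from the information at $t_f^{(n)}$ and on the finiteness of $\mathcal{L}_n$ and $I_{\max}$ (so no measurability subtleties arise), and (ii) the strict positivity $b(\alpha_n,I_n)\ge 2>0$, which both makes the mediant inequality applicable and rules out a degenerate zero-length idle frame. I do not anticipate a genuine obstacle: the lemma is essentially a packaging of the standard fact that randomization cannot improve a ratio-of-expectations objective over a finite action set, specialized to the idle sub-problem, together with the observation that a pure action replaces the stochastic setup time by its mean. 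If convenient, I would also note in passing that for each fixed $\alpha_n$ the objective \eqref{DPP_idle} is a ratio of low-degree polynomials in the scalar $I_n$, so the inner search over $I_n\in\{1,\dots,I_{\max}\}$ is elementary.
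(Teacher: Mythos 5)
Your argument is correct and is essentially the paper's own proof in Appendix~A: the paper likewise reduces \eqref{DPP_idle_origin} to a ratio $\expect{F}/\expect{G}$ of quantities that are deterministic functions of the (possibly randomized) action, sets $m$ to be the best pure-action value, and passes from the pointwise bound $F\ge mG$ to $\expect{F}\ge m\expect{G}$ — which is exactly your mediant inequality written out. The only cosmetic difference is that \eqref{DPP_idle} and the pure-action evaluation of \eqref{DPP_idle_origin} differ by the additive constant $-B_0/2$ (visible in \eqref{appendix_A_interim}), which of course does not affect the minimizer.
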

The proof of above lemma is given in appendix A.
\end{itemize}
Then, the server computes the minimum of \eqref{DPP_idle}, which is nothing but a deterministic optimization problem. It
 goes in the following two steps:
\begin{enumerate}
  \item For each $\alpha_n\in\mathcal{L}_n$, compute optimal
$I_n[f]$, which is one of the two integers closest to \eqref{solution_idle}
\begin{figure*}
\normalsize
\begin{equation}\label{server_decision}
D_n[f]\triangleq
\frac{\mathbb{E}\left[\left.\sum_{t=t_f^{(n)}}^{t=t_{f+1}^{(n)}-1}\left(VW_n(t)+Ve_nH_n(t)+Vg_n(t)-Q_n(t_f^{(n)})\mu_n(t)H_n(t)\right)
+\left(t-t_f^{(n)}\right)B_0~\right|~Q_n(t_f^{(n)})\right]}{\expect{T_n[f]~\left|~Q_n(t_f^{(n)})\right.}}
\end{equation}
\begin{align}\label{DPP_idle_origin}
D_n[f]=\frac{V\hat{W}_n(\alpha_n[f])m_{\alpha_n[f]}+Ve_n-Q_n(t_f^{(n)})\mu_n+\expect{V\hat{g}(\alpha_n[f])I_n[f]+
  \frac{B_0}{2}T_n[f](T_n[f]-1)\left|~Q_n(t_f^{(n)})\right.}}
{\expect{T_n[f]~\left|~Q_n(t_f^{(n)})\right.}}
\end{align}
\begin{equation}\label{DPP_idle}
D_n[f]=\frac{V\hat{W}_n(\alpha_n[f])m_{\alpha_n[f]}+Ve_n-Q_n(t_f^{(n)})\mu_n+\frac{B_0}{2}\sigma_n^2+V\hat{g}(\alpha_n[f])I_n[f]}
{I_n[f]+m_{\alpha_n[f]}+1}
+\frac{B_0}{2}(I_n[f]+m_{\alpha_n[f]}+1).
\end{equation}
\begin{equation}\label{solution_idle}
\left\lceil\sqrt{\left\lfloor\frac{B_0}{2}\left( Vm_{\alpha_n}\left(\hat{W}_n(\alpha_n)-g_n(\alpha_n)\right)
  +V\left(e_n-g_n(\alpha_n)\right)-U_n(t_f^{(n)})\mu_n+\frac{B_0}{2}\sigma_n^2\right)\right\rfloor_{(m_{\alpha_n}+1)^2}}
-m_{\alpha_n}-1\right\rceil^{I_{\max}}
\end{equation}
\end{figure*}
which achieves smaller value on \eqref{DPP_idle},
where $\lfloor\cdot\rfloor$ and$\lceil\cdot\rceil$ stand for floor function and ceiling function respectively.
  \item Compare \eqref{DPP_idle} for different $\alpha_n\in\mathcal{L}_n$ and choose the one achieving the minimum.
\end{enumerate}
Then, the server compares \eqref{DPP_active} with the minimum of \eqref{DPP_idle}. If \eqref{DPP_active} is less than the minimum of \eqref{DPP_idle}, then, the server chooses to go active. Otherwise, the server chooses to go idle and stay idle for the time length given the one achieving the minimum of \eqref{DPP_idle}.

Above all, our algorithm is summarized in Algorithm 1.
\begin{algorithm}
\begin{Alg}~
\begin{itemize}
  \item At each time slot $t$, the data center observes $\lambda(t)$, $c(t)$, and $\mathbf{Q}(t)$ chooses rejection decision $d(t)$ according to \eqref{reject_decision} and chooses routing decision $R_n(t)$ according to \eqref{routing_decision}.
  \item For each server $n\in\mathcal{N}$, at the beginning of its $f$-th frame $t_f^{(n)}$, observe its queue state $Q_n(t_f^{(n)})$ and compute \eqref{DPP_active} and the minimum of \eqref{DPP_idle}. If \eqref{DPP_active} is less than the minimum of \eqref{DPP_idle}, then the server still stays active. Otherwise, the server switches to the idle state minimizing \eqref{DPP_idle} and stays idle for $I_n[f]$ achieving the minimum of \eqref{DPP_idle}.
  \item Update $Q_n(t),~\forall n\in\mathcal{N}$ according to \eqref{queue_update}.
\end{itemize}
\end{Alg}
\end{algorithm}

\section{Performance Analysis}\label{section_performance_analysis}
\revise{In this section, we prove the online algorithm introduced in the last section makes all request queues $Q_n(t)$ bounded (on the order of $V$) and achieves the near optimality with sub-optimality gap on the order of $1/V$.}

\subsection{Bounded request queues}
\begin{lemma}\label{bounded_delay}
If $Q_n(0)=0,~\forall n\in\mathcal{N}$, then, each request queue $Q_n(t)$ is deterministically bounded with bound:
$Q_n(t)\leq Vc_{\max}+R_{\max},~\forall t,~\forall n\in\mathcal{N}$,
where $c_{\max}\triangleq \max_{c\in\mathcal{C}}c$.
\end{lemma}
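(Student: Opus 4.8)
The plan is to prove the bound $Q_n(t) \le Vc_{\max} + R_{\max}$ by induction on $t$, exploiting the thresholding structure of the routing rule \eqref{routing_decision}. The base case is immediate since $Q_n(0) = 0 \le Vc_{\max} + R_{\max}$. For the inductive step, I would fix a server $n$ and a slot $t$, assume $Q_n(t) \le Vc_{\max} + R_{\max}$, and split into two cases according to how large $Q_n(t)$ is relative to the threshold $Vc(t)$.

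First I would handle the case $Q_n(t) \le Vc(t)$. In this case the routing rule may send up to $\min\{\lambda(t), R_{\max}\} \le R_{\max}$ new requests to server $n$, so from the queue dynamics \eqref{queue_update} we get $Q_n(t+1) \le Q_n(t) + R_n(t) \le Vc(t) + R_{\max} \le Vc_{\max} + R_{\max}$, which closes the induction in this case (here I drop the $-\mu_n(t)H_n(t)$ term since it only helps). Second, I would handle the case $Q_n(t) > Vc(t)$. Then by \eqref{routing_decision}, server $n$ is not the recipient of any routed requests (it fails the condition $Q_n(t) \le Vc(t)$), so $R_n(t) = 0$, and \eqref{queue_update} gives $Q_n(t+1) = \max\{Q_n(t) - \mu_n(t)H_n(t), 0\} \le Q_n(t) \le Vc_{\max} + R_{\max}$ by the inductive hypothesis. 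Combining the two cases completes the induction.

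The only subtlety to get right is that in the first case I must know that $c(t) \le c_{\max}$, which holds by the definition $c_{\max} = \max_{c \in \mathcal{C}} c$ and the fact that $c(t)$ takes values in the finite set $\mathcal{C}$; and that the total routed mass to any single server is at most $R_{\max}$, which follows from the constraint $\sum_n R_n(t) \le R_{\max}$ together with nonnegativity. There is no real obstacle here — the argument is entirely driven by the worst-case behavior of the threshold rule, and the $-\mu_n(t)H_n(t)$ service term is never needed for the upper bound (it is only relevant for showing the queues actually drain). I would present the two cases compactly in a single displayed inequality chain each, and note that the bound is deterministic and uniform in $t$ and $n$, as claimed.
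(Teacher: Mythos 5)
Your proof is correct and follows essentially the same route as the paper's: induction on $t$ with a two-case analysis driven by the thresholding rule \eqref{routing_decision}. The only cosmetic difference is that you split on $Q_n(t)\le Vc(t)$ versus $Q_n(t)>Vc(t)$, while the paper splits on $Q_n(t)\le Vc_{\max}$ versus $Vc_{\max}<Q_n(t)\le Vc_{\max}+R_{\max}$; both case analyses close the induction in the same way.
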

\begin{proof}
We use induction to prove the claim. Base case is trivial since $Q_n(0)=0\leq Vc_{\max}+R_{\max}$. Suppose the claim holds at the beginning of $t=i$ for $i>0$, so that
$Q_n(i)\leq Vc_{\max}+R_{\max}.$
Then,
\begin{enumerate}
  \item If $Q_n(i)\leq Vc_{\max}$, then, it is possible for the queue to increase during slot $i$. However, the increase of the queue within one slot is bounded by $R_{\max}$. which implies at the beginning of slot $i+1$,
      $Q_n(i+1)\leq Vc_{\max}+R_{\max}.$
  \item If $Vc_{\max} < Q_n(i)\leq Vc_{\max}+R_{\max}$, then, according to \eqref{routing_decision}, it is impossible to route any request to server $n$ during slot $i$, and $R_n(i)=0$ which results in
      $Q_n(i+1)\leq Vc_{\max}+R_{\max}.$
\end{enumerate}
Above all, we finished the proof of lemma.
\end{proof}

\begin{lemma}\label{constraint_satisfy}
The proposed algorithm meets the constraint \eqref{obj_4} with probability 1.
\end{lemma}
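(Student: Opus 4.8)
The plan is to obtain the constraint directly from the queue dynamics \eqref{queue_update} together with the deterministic queue bound of Lemma \ref{bounded_delay}. The starting point is that dropping the projection $\max\{\cdot,0\}$ in \eqref{queue_update} only weakens the recursion, so that for every $n\in\mathcal{N}$ and every slot $t$,
\[
Q_n(t+1)\ge Q_n(t)+R_n(t)-\mu_n(t)H_n(t).
\]
Summing over $t\in\{0,1,\ldots,T-1\}$, telescoping the left-hand side, and using $Q_n(0)=0$ gives $\sum_{t=0}^{T-1}R_n(t)-\sum_{t=0}^{T-1}\mu_n(t)H_n(t)\le Q_n(T)$. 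Dividing by $T$ and invoking Lemma \ref{bounded_delay}, which gives $Q_n(T)\le Vc_{\max}+R_{\max}$ uniformly in $T$, the right-hand side is $O(1/T)$ and hence vanishes as $T\to\infty$; in other words, the bounded queue is rate stable on every sample path.

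The second step is to pass to the limit, which yields, pathwise,
\[
\limsup_{T\to\infty}\left(\frac{1}{T}\sum_{t=0}^{T-1}R_n(t)-\frac{1}{T}\sum_{t=0}^{T-1}\mu_n(t)H_n(t)\right)\le 0,
\]
so that whenever the time averages $\overline{R}_n$ and $\overline{\mu}_n$ defined in Section \ref{section_problem_formulation} exist, $\overline{R}_n\le\overline{\mu}_n$, which is exactly \eqref{obj_4}. Since the bound in Lemma \ref{bounded_delay} is deterministic, this reasoning holds for every realization, hence with probability $1$.

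The one delicate point — the step I expect to carry the real work — is justifying that the Cesàro averages $\overline{R}_n$ and $\overline{\mu}_n$ genuinely converge almost surely, since the displayed inequality only controls their difference while a priori each could oscillate. I would dispose of this by exploiting the renewal structure of the per-server decisions: applying the renewal-reward strong law over the frames $T_n[f]$ shows that $\frac1T\sum_t \mu_n(t)H_n(t)$ converges a.s. to a ratio of per-frame expectations, and the i.i.d.\ property of $(\lambda(t),c(t))$ together with the boundedness from Assumption \ref{observability} and the deterministically bounded queue (which make $R_n(t)$ a bounded functional of processes with well-defined averages) handles the routing side. Alternatively, one simply reads \eqref{obj_4} in the $\limsup$ sense, in which case the displayed inequality already closes the argument.
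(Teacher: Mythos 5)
Your proof is correct and follows essentially the same route as the paper's: drop the projection in \eqref{queue_update}, telescope, divide by $T$, and invoke the deterministic bound of Lemma \ref{bounded_delay} to make $Q_n(T)/T$ vanish. The paper does not even address your ``delicate point'' about existence of the Cesàro limits (it implicitly reads \eqref{obj_4} in the $\limsup$ sense), so your closing remark is, if anything, more careful than the original.
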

\begin{proof}
From the queue update rule \eqref{queue_update}, it follows,
$Q_n(t+1)\geq Q_n(t)+R_n(t)-\mu_nH_n(t)$.
Taking telescoping sums from 0 to $T-1$ gives
$Q_n(T)\geq Q_n(0)+\sum_{t=0}^{T-1}R_n(t)-\sum_{t=0}^{T-1}\mu_nH_n(t)$.
Since $Q_n(0)=0$, dividing both sides by $T$ gives
$\frac{Q_n(T)}{T}\geq \frac{1}{T}\sum_{t=0}^{T-1}R_n(t)-\frac{1}{T}\sum_{t=0}^{T-1}\mu_nH_n(t)$.
Substitute the bound $Q_n(T)\leq Vc_{\max}+R_{\max}$ from lemma \ref{bounded_delay} into above inequality and take limit as $T\rightarrow\infty$ give the desired result.
\end{proof}
\subsection{Optimal randomized stationary policy}\label{op_stat_algorithm}
In this section, we introduce a class of algorithms which are theoretically helpful for doing analysis, but practically impossible to implement.

Since servers are coupled only through time average constraint \eqref{obj_4}, each server $n$ can be viewed as a separate renewal system, thus, it can be shown that any possible time average service rate $\overline{\mu}_n$ can be achieved through a frame based stationary randomized service decision, meaning that the decisions are i.i.d. over frames (the proof is similar to Lemma 4.2 of \cite{asyn-paper} and is omitted here for brevity).
Furthermore, it can be shown that the optimality of \eqref{obj_3}-\eqref{obj_5} can be achieved over the following randomized stationary algorithms: At the beginning of each time slot $t$, the data center observes the incoming requests $\lambda(t)$ and rejecting cost $c(t)$, then routes $R_n^*(t)$ incoming requests to server $n$ and rejects $d^*(t)$ requests, both of which are random functions on $(\lambda(t),c(t))$. They satisfy the same instantaneous relation as \eqref{obj_5}.
Meanwhile, server $n$ chooses a frame based stationary randomized service decision $(\alpha_n^*[f],I_n^*[f])$, so that the optimal service rate is achieved.

If one knows the stationary distribution for $(\lambda(t),c(t))$, then, this optimal control algorithm can be computed using dynamic programming or linear programming. Moreover, the optimal setup cost $W_n^*(t)$, idle cost $g_n^*(t)$, and the active state indicator $H^*(t)$ can also be deduced.
Since the algorithm is stationary, these three cost processes are all ergodic markov processes. Let $T_n^*[f]$ be the frame length process under this algorithm. Thus, it follows from renewal reward theorem that
$\left\{\sum_{t=t^{(n)}_f}^{t^{(n)}_{f+1}-1}W_n^*(t)\right\}_{f=0}^{+\infty}$,
$\left\{\sum_{t=t^{(n)}_f}^{t^{(n)}_{f+1}-1}g_n^*(t)\right\}_{f=0}^{+\infty}$,
$\left\{\sum_{t=t^{(n)}_f}^{t^{(n)}_{f+1}-1}e_nH_n^*(t)\right\}_{f=0}^{+\infty}$, $\left\{\sum_{t=t^{(n)}_f}^{t^{(n)}_{f+1}-1}\mu_n(t)H_n^*(t)\right\}_{f=0}^{+\infty}$
and
$\left\{T_n^*[f]\right\}_{f=0}^{+\infty}$
are all \emph{i.i.d. random variables over frames}. Let $\overline{C}^*$, $\overline{W}_n^*$, $\overline{G}_n^*$ and $\overline{E}_n^*$ be the optimal time average costs. Let $\overline{R}_n^*$, $\overline{\mu}_n^*$ and $\overline{d}^*$ be the optimal time average routing rate, service rate and rejection rate respectively. Then, by strong law of large numbers,
\begin{equation}
\overline{W}_n^*
=\frac{\expect{\sum_{t=t^{(n)}_f}^{t^{(n)}_{f}+T_n^*[f]-1}W_n^*(t)}}{\expect{T_n^*[f]}} \label{iid_W}
\end{equation}
\begin{equation}
\overline{E}_n^*
=\frac{\expect{\sum_{t=t^{(n)}_f}^{t^{(n)}_{f}+T_n^*[f]-1}e_nH_n^*(t)}}{\expect{T_n^*[f]}} \label{iid_E}
\end{equation}
\begin{equation}
\overline{G}_n^*
=\frac{\expect{\sum_{t=t^{(n)}_f}^{t^{(n)}_{f}+T_n^*[f]-1}g_n^*(t)}}{\expect{T_n^*[f]}} \label{iid_G}
\end{equation}
\begin{align}
\overline{\mu}_n^*
=\frac{\expect{\sum_{t=t^{(n)}_f}^{t^{(n)}_{f}+T_n^*[f]-1}\mu_n(t)H_n^*(t)}}{\expect{T_n^*[f]}}, \label{iid_mu}
\end{align}
Also, notice that $R_n^*(t)$ and $d^*(t)$ depends only on the random variable $\lambda(t)$ and $c(t)$, which is i.i.d. over slots. Thus, $R_n^*(t)$ and $d^*(t)$ are also \emph{i.i.d. random variables over slots}. By law of large numbers,
\begin{align}
\overline{R}_n^*=&\expect{R_n^*(t)}, \label{iid_R}\\
\overline{C}^*=&\expect{c(t)d^*(t)}. \label{iid_d}
\end{align}

\begin{remark}
Notice that the idle time $I_n^*[f]\in[1,I_{\max}]$ and the first two moments of the setup time are bounded, it follows the first two moments of $T_n^*[f]$ are bounded.
\end{remark}

%Although the optimal stationary algorithm exists, it is practically impossible to compute because the data center has no knowledge about stationary distribution for $(\lambda(t),c(t))$.
%However, as we shall see in the next section, this algorithm provides us with an important baseline performance guarantee where our analysis relies on.

\subsection{Key features of thresholding algorithm}
In this part, we compare the algorithm deduced from the two optimization problems \eqref{front_end_routing} and \eqref{server_decision} to that of the best stationary algorithm in section \ref{op_stat_algorithm}, illustrating the key features of the proposed online algorithm.
Define $\mathcal{F}(t)$ as the system history up till slot $t$, which includes all the decisions taken and all the random events before slot $t$. Let's first consider \eqref{front_end_routing}. For simplicity of notations, define two random processes $\{X_n[f]\}_{f=0}^{\infty}$ and $\{Z[t]\}_{t=0}^{\infty}$ as follows
\begin{align*}
X_n[f]=&\sum_{t=t_f^{(n)}}^{t=t_{f+1}^{(n)}-1}\left(V\left(W_n(t)-\overline{W}_n^*\right)+V\left(e_nH_n(t)-\overline{E}_n^*\right)\right.\\
       &+V\left(g_n(t)-\overline{G}_n^*\right)-Q_n(t_f^{(n)})\left(\mu_nH_n(t)-\overline{\mu}^*\right)\\
       &\left.+\left(t-t_f^{(n)}\right)B_0-\Psi_n\right),\\
Z[t]=&V\left(c(t)d(t)-\overline{C}^*\right)+\sum_{n=1}^NQ_n(t)\left(R_n(t)-\overline{R}_n^*\right),
\end{align*}
where $\Psi_n=\frac{B_0}{2}\frac{\expect{T^*_n[f](T^*_n[f]-1)}}{\expect{T^*_n[f]}}$ and $B_0=\frac{1}{2}(R_{\max}+\mu_{\max})\mu_{\max}$.

Given the system info $\mathcal{F}(t)$, the random events $c(t)$ and $\lambda(t)$, the solutions \eqref{reject_decision} and \eqref{routing_decision} take rejecting and routing decisions so as to minimize \eqref{front_end_routing} over all possible routing and rejecting decisions at time slot $t$. Thus, the proposed algorithm achieves smaller value on \eqref{front_end_routing} compared to that of the best stationary algorithm in section \ref{op_stat_algorithm}. Formally, this idea can be stated as the following inequality:
$\expect{\left.Vc(t)d(t)+\sum_{n=1}^NQ_n(t)R_n(t)~\right|~c(t),\lambda(t),\mathcal{F}(t)}$
$\leq \expect{\left.Vc(t)d^*(t)+\sum_{n=1}^NQ_n(t)R_n^*(t)~\right|~c(t),\lambda(t),\mathcal{F}(t)}$.
Taking expectation regarding $c(t)$ and $\lambda(t)$
using the fact that the best stationary algorithm on $R^*_n(t)$ and $d^*(t)$ are i.i.d. over slots (independent of $\mathcal{F}(t)$), together with \eqref{iid_R} and \eqref{iid_d}, we get
\begin{equation}\label{front_end_feature}
\expect{\left.Z(t)~\right|~\mathcal{F}(t)}\leq 0.
\end{equation}
Similarly, for \eqref{server_decision}, the proposed service decisions within frame $f$ minimizes $D_n[f]$ in \eqref{server_decision}, thus, compared to that of the best stationary policy, \eqref{op_stat_interim} holds.
\begin{figure*}
\begin{align}\label{op_stat_interim}
&\frac{\mathbb{E}\left[\left.\sum_{t=t_f^{(n)}}^{t=t_{f+1}^{(n)}-1}\left(V(W_n(t)+e_nH_n(t)+g_n(t))
        -Q_n(t_f^{(n)})\mu_n(t)H_n(t)\right)+\left(t-t_f^{(n)}\right)B_0~\right|~\mathcal{F}(t_f^{(n)})\right]}
{\expect{T_n[f]~\left|~\mathcal{F}(t_f^{(n)})\right.}}\nonumber\\
\leq&
\frac{\expect{\left.\sum_{t=t_f^{(n)}}^{t=t_{f}^{(n)}+T_n^*[f]-1}\left(V\left(W_n^*(t)+e_nH_n^*(t)+g_n^*(t)\right)-Q_n(t_f^{(n)})\mu_nH_n^*(t)\right)
+\frac{B_0}{2}T^*_n[f](T^*_n[f]-1)~\right|~\mathcal{F}(t_f^{(n)})}}
{\expect{T^*_n[f]~\left|~\mathcal{F}(t_f^{(n)})\right.}}
\end{align}
\end{figure*}
Again, using the fact that the optimal stationary algorithm gives i.i.d. $W_n^*(t)$, $g_n^*(t)$, $H_n^*(t)$ and $T_n^*[f]$ over frames (independent of $\mathcal{F}(t_f^{(n)})$),  as well as \eqref{iid_W}, \eqref{iid_E} and \eqref{iid_mu}, we get
\begin{align}
\expect{X_n[f]~\left|~\mathcal{F}(t_f^{(n)})\right.}
\left/\expect{T_n[f]~\left|~\mathcal{F}(t_f^{(n)})\right.}\right.\leq0 \label{server_feature}
\end{align}

\subsection{Bounded average of supermartingale difference sequeces}
The key feature inequalities \eqref{front_end_feature} and \eqref{server_feature} provide us with bounds on the expectations. The following lemma serves as a stepping stone passing from expectation bounds to probability 1 bounds. \revise{We need the following basic definition of supermartingale to start with:
\begin{definition}[Supermartingale]
Let $\{\mathcal{F}_t\}_{t=0}^{\infty}$ be a filtration, i.e. an increasing sequence of $\sigma$-fields. Let $\{Y_t\}_{t=0}^\infty$ be a sequence of random variables. Then, $\{Y_t\}_{t=0}^\infty$ is said to be a supermartingale if (i) $\expect{|Y_t|}<\infty,~\forall t$, (ii) $Y_t$ is measurable with respect to $\mathcal{F}_t$ for all $t$, (iii) $\expect{Y_{t+1}|\mathcal{F}_t}\leq Y_t,~\forall t$. The corresponding difference process $X_t=Y_{t+1}-Y_t$ is called the supermartingale difference sequence.
\end{definition}
}
\revise{
We also need the following strong law of large numbers for supermartingale difference sequences:
\begin{lemma}[Corollary 4.2 of \cite{prob_1_convergence}]\label{prob-1-converge}
Let $\{X_t\}_{t=0}^\infty$ be a supermartingale difference sequence. If 
$$\sum_{t=1}^\infty \left.\expect{X_t^2}\right/t^2<\infty,$$
then,
\[\limsup_{T\rightarrow\infty}\frac1T\sum_{t=0}^{T-1}X_t\leq 0,\]
with probability 1.
\end{lemma}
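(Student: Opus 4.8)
The plan is to prove this by a Doob-type decomposition of the difference sequence followed by an $L^2$ martingale convergence argument and Kronecker's lemma. First I would split each term as
\[
X_t = \big(X_t - \expect{X_t\mid\mathcal{F}_t}\big) + \expect{X_t\mid\mathcal{F}_t} =: \tilde{X}_t + A_t .
\]
The hypothesis $\sum_{t\geq 1}\expect{X_t^2}/t^2<\infty$ forces $\expect{X_t^2}<\infty$ for each $t\geq1$, so all these conditional expectations are well defined; $\tilde{X}_t$ is then a genuine martingale difference sequence relative to the shifted filtration $\{\mathcal{F}_{t+1}\}$ (it is $\mathcal{F}_{t+1}$-measurable and $\expect{\tilde{X}_t\mid\mathcal{F}_t}=0$), while $A_t=\expect{Y_{t+1}\mid\mathcal{F}_t}-Y_t\leq 0$ is exactly the supermartingale property. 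Hence $\frac1T\sum_{t=0}^{T-1}X_t=\frac1T\sum_{t=0}^{T-1}\tilde{X}_t+\frac1T\sum_{t=0}^{T-1}A_t$, and since every $A_t\leq 0$ the second average is nonpositive for all $T$. It therefore suffices to show $\frac1T\sum_{t=0}^{T-1}\tilde{X}_t\to 0$ almost surely.

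For the martingale part I would introduce the weighted partial sums $M_T=\sum_{t=1}^{T}\tilde{X}_t/t$, which form a martingale adapted to $\{\mathcal{F}_{T+1}\}$. By orthogonality of martingale differences and the identity $\expect{\tilde{X}_t^2}=\expect{X_t^2}-\expect{(\expect{X_t\mid\mathcal F_t})^2}\leq\expect{X_t^2}$,
\[
\sup_T\expect{M_T^2}=\sum_{t=1}^{\infty}\frac{\expect{\tilde{X}_t^2}}{t^2}\leq\sum_{t=1}^{\infty}\frac{\expect{X_t^2}}{t^2}<\infty ,
\]
so $M_T$ is an $L^2$-bounded martingale and converges almost surely to a finite limit by the martingale convergence theorem. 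Applying Kronecker's lemma with the weights $b_t=t\uparrow\infty$ to the a.s.\ convergent series $\sum_t \tilde{X}_t/t$ gives $\frac1T\sum_{t=1}^{T}\tilde{X}_t\to 0$ almost surely; the single remaining term $\tilde X_0/T$ tends to $0$ since $\tilde X_0\in L^1$ is a.s.\ finite, so $\frac1T\sum_{t=0}^{T-1}\tilde X_t\to 0$ a.s. Combining this with the nonpositivity of the compensator average yields $\limsup_{T\to\infty}\frac1T\sum_{t=0}^{T-1}X_t\leq 0$ with probability 1.

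The main obstacle is the probabilistic bookkeeping rather than any single sharp estimate: one must confirm that $\tilde X_t$ is truly a martingale difference with respect to $\{\mathcal{F}_{t+1}\}$ (adaptedness and integrability), that the $L^2$ bound on $M_T$ coincides with the stated hypothesis up to the harmless replacement of $\expect{X_t^2}$ by $\expect{\tilde X_t^2}$, and that Kronecker's lemma is applied pathwise on the almost-sure event where $\sum_t\tilde X_t/t$ converges. Everything else --- the decomposition, the sign of the compensator, and the final $\limsup$ assembly --- is routine once these points are in place. If one prefers to avoid invoking Kronecker's lemma, an alternative is to bound $\max_{2^k\le T<2^{k+1}}\big|\sum_{t<T}\tilde X_t\big|$ over dyadic blocks using Doob's maximal inequality and then apply Borel--Cantelli, but the Kronecker route is shorter.
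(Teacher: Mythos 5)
Your argument is correct, and it is worth noting up front that the paper itself offers no proof of this lemma: it is imported verbatim as Corollary 4.2 of the cited reference on almost-sure convergence, so there is no in-paper derivation to compare against. What you have written is essentially the standard proof of the strong law of large numbers for (super)martingale differences. The decomposition $X_t=\tilde X_t+A_t$ with $A_t=\expect{X_t\mid\mathcal{F}_t}\le 0$ correctly isolates the supermartingale drift, whose Ces\`aro average is nonpositive pathwise; the reduction to $\frac1T\sum\tilde X_t\to 0$ a.s.\ is then handled by the classical Chow/Kronecker route: the weighted sums $M_T=\sum_{t\le T}\tilde X_t/t$ are an $L^2$-bounded martingale because the differences are orthogonal and $\expect{\tilde X_t^2}\le\expect{X_t^2}$ by the conditional-expectation contraction, so $M_T$ converges a.s.\ and Kronecker's lemma finishes the job. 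The small bookkeeping points you flag are all handled correctly: integrability of $X_t$ (needed for the conditional expectations) follows from $\expect{|Y_t|}<\infty$ in the supermartingale definition, $\tilde X_t$ is indeed $\mathcal{F}_{t+1}$-measurable with $\expect{\tilde X_t\mid\mathcal{F}_t}=0$, and the lone $t=0$ term is harmless. In short, you have supplied a complete, self-contained proof of a black-box lemma the paper only cites, which if anything strengthens the exposition.
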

}

With this lemma, we are ready to prove the following result:
\begin{lemma}\label{bounded_supMG}
Under the proposed algorithm, the following hold with probability 1,
\begin{align}
&\limsup_{F\rightarrow\infty}\frac{1}{F}\sum_{f=0}^{F-1}X_n[f]\leq0, \label{prob_1_server}\\
&\limsup_{T\rightarrow\infty}\frac{1}{T}\sum_{t=0}^{T-1}Z[t]\leq0. \label{prob_1_front_end}
\end{align}
\end{lemma}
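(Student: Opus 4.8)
The plan is to apply Lemma \ref{prob-1-converge} (the strong law for supermartingale difference sequences) to each of the two processes, after verifying the two hypotheses: that the partial sums form a supermartingale and that the $1/t^2$-weighted second moments are summable. The key observation supplied by the previous subsection is that the conditional-expectation inequalities \eqref{front_end_feature} and \eqref{server_feature} are exactly the supermartingale-difference conditions in disguise — but only after one accounts for the ratio structure in \eqref{server_feature} versus the plain expectation in \eqref{front_end_feature}.

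For the front-end process, define $Y[T]=\sum_{t=0}^{T-1}Z[t]$ with the filtration $\mathcal{F}(t)$. Then \eqref{front_end_feature} says precisely $\expect{Z[t]\mid\mathcal{F}(t)}\leq 0$, i.e. $\expect{Y[T+1]\mid\mathcal{F}(T)}\leq Y[T]$, so $\{Y[T]\}$ is a supermartingale and $\{Z[t]\}$ its difference sequence. To invoke Lemma \ref{prob-1-converge} I need $\sum_t \expect{Z[t]^2}/t^2<\infty$; here the point is that $Z[t]$ is \emph{deterministically bounded}: $Q_n(t)\leq Vc_{\max}+R_{\max}$ by Lemma \ref{bounded_delay}, while $c(t)$, $d(t)$, $R_n(t)$, $\overline{C}^*$, $\overline{R}_n^*$ are all bounded (rejection/routing are bounded by $\lambda(t)$ and $R_{\max}$, and the starred quantities are bounded time-averages of bounded processes). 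Hence $\expect{Z[t]^2}\leq C$ for a constant $C$ depending on $V$, and $\sum_t C/t^2<\infty$. This immediately yields \eqref{prob_1_front_end}.

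The server process \eqref{prob_1_server} is the harder half, and this is where the main obstacle lies: \eqref{server_feature} is a statement about a \emph{ratio} of conditional expectations, not $\expect{X_n[f]\mid\mathcal{F}(t_f^{(n)})}\leq 0$ directly, and moreover the frames $T_n[f]$ have random (though bounded-moment) length, so the natural filtration is indexed by frames $f$ rather than slots $t$. I would let $\mathcal{G}_f=\mathcal{F}(t_f^{(n)})$ and set $Y_n[F]=\sum_{f=0}^{F-1}X_n[f]$. Since $\expect{T_n[f]\mid\mathcal{G}_f}\geq 1>0$ almost surely, dividing by it is harmless for the sign, so \eqref{server_feature} does give $\expect{X_n[f]\mid\mathcal{G}_f}\leq 0$ — this recovers the supermartingale property with respect to $\{\mathcal{G}_f\}$. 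For the summability hypothesis I must bound $\expect{X_n[f]^2}$: $X_n[f]$ is a sum over a frame of bounded per-slot quantities (again using $Q_n(t_f^{(n)})\leq Vc_{\max}+R_{\max}$ and boundedness of $W_n,g_n,\mu_n,e_n$ and of $\Psi_n$), but the number of terms is $T_n[f]\leq 1+I_{\max}+\tau_n[f]$, which is random. So $|X_n[f]|\leq (\text{const})\cdot T_n[f]^2 + (\text{const})\cdot T_n[f]$ roughly — the $T_n[f]^2$ coming from the $\sum_t (t-t_f^{(n)})B_0$ term — hence $X_n[f]^2$ is controlled by $T_n[f]^4$ up to constants, and here is exactly why Assumption \ref{bounded_moment_assumption} demands bounded \emph{fourth} moments of the setup time: it gives $\expect{T_n[f]^4\mid\mathcal{G}_f}\leq C'$ for a constant $C'$, hence $\expect{X_n[f]^2}\leq C''$ uniformly in $f$, so $\sum_f \expect{X_n[f]^2}/f^2<\infty$ and Lemma \ref{prob-1-converge} applies to give \eqref{prob_1_server}.

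I expect the bulk of the write-up to be the second-moment estimate for $X_n[f]$: carefully expanding the frame sum, pulling out the $Q_n(t_f^{(n)})$ factor as a deterministic $O(V)$ bound, using $(\sum_{i} a_i)^2 \leq T_n[f]\sum_i a_i^2$ or just crude term-by-term bounds to land on a polynomial in $T_n[f]$ of degree at most four, and then quoting Assumption \ref{bounded_moment_assumption} (via the Remark on bounded first two moments of $T_n^*[f]$, and an analogous bound for $T_n[f]$ under the proposed algorithm since $I_n[f]\leq I_{\max}$ and $\tau_n[f]$ has bounded fourth moment). One subtlety worth a sentence: the bound on $\expect{X_n[f]^2}$ depends on $V$, but $V$ is fixed, so this is harmless for the almost-sure statement. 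Everything else — the supermartingale verification — is a one-line consequence of \eqref{front_end_feature} and \eqref{server_feature} once the right filtration is named.
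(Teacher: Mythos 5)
Your proposal is correct and follows essentially the same route as the paper: both verify the supermartingale property via \eqref{front_end_feature} and (for the server process) by multiplying the nonpositive ratio in \eqref{server_feature} through by the positive conditional expectation of $T_n[f]$, then bound $|X_n[f]|$ by a quadratic in $T_n[f]$ so that $\expect{X_n[f]^2}$ is controlled by the fourth moment of $T_n[f]$ (whence Assumption \ref{bounded_moment_assumption}), and finally invoke Lemma \ref{prob-1-converge}. The deterministic boundedness argument you give for $Z[t]$ matches the part the paper omits ``for brevity.''
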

\begin{proof}
The key to the proof is treating these two sequences as super-martingale difference sequence and applying law of large numbers for super-martingale difference sequence (theorem 4.1 and corollary 4.2 in \eqref{prob_1_server}).

We first look at the sequence $\{X_n[f]\}_{f=0}^{\infty}$. Let
$Y_n[F]=\sum_{f=0}^{F-1}X_n[f]$.
We first prove that $Y_n[F]$ is a supermartingale. Notice that $Y_n[F]\in\mathcal{F}\left(t_n^{(F)}\right)$, i.e. it is measurable given all the information before frame $t_n^{(F)}$, and $|Y_n[F]|<\infty,~\forall F<\infty$. Furthermore,
$\expect{Y_n[F+1]-Y_n[F]~\left|~\mathcal{F}\left(t_F^{(n)}\right)\right.}$
$=\expect{X_n[F]~\left|~\mathcal{F}\left(t_F^{(n)}\right)\right.}$
%$=(\expect{X_n[F]~\left|~\mathcal{F}\left(t_F^{(n)}\right)\right.}\left/\expect{T_n[F]~\left|~\mathcal{F}\left(t_F^{(n)}\right)\right.}\right.)
%\cdot\expect{T_n[F]~\left|~\mathcal{F}\left(t_F^{(n)}\right)\right.}$
$\leq 0\cdot\expect{T_n[F]~\left|~\mathcal{F}\left(t_F^{(n)}\right)\right.}=0$,
where the only inequality follows from \eqref{server_feature}. Thus, it follows $Y_n[F]$ is a supermartingale. Next, we show that the second moment of supermartingale differences, i.e. $\expect{X_n[f]^2}$, is deterministically bounded by a fixed constant for any $f$. This part of proof is given in Appendix B. Thus, the following holds:
$\sum_{f=1}^{\infty}\expect{X_n[f]^2}\left/f^2\right.<\infty$.
Now, applying Lemma \ref{prob-1-converge} immediately gives \eqref{prob_1_server}.

Similarly, we can prove \eqref{prob_1_front_end} by proving $M[t]=\sum_{t=0}^{T-1}Z[t]$ is a supermartingale with bounded second moment on differences using \eqref{iid_R}, \eqref{iid_d} and \eqref{front_end_feature}. The procedure is almost the same as above and we omitted the details here for brevity.
\end{proof}

\begin{corollary}\label{corollary_ratio_time_average}
The following ratio of time averages is upper bounded with probability 1, \\
$\limsup_{F\rightarrow\infty}\left.\sum_{f=0}^{F-1}X_n[f]\right/\sum_{f=0}^{F-1}T_n[f]\leq 0$.
\end{corollary}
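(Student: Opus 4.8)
The plan is to deduce Corollary \ref{corollary_ratio_time_average} from Lemma \ref{bounded_supMG} by a ratio argument, exploiting the fact that the frame lengths $T_n[f]$ are at least $1$ and have a well-controlled time average. First I would record the trivial lower bound $T_n[f]\geq 1$ for all $f$ (immediate from \eqref{frame_length}, since an active frame has length exactly $1$ and an idle frame has length $I_n[f]+\tau_n[f]+1\geq 2$). Consequently $\sum_{f=0}^{F-1}T_n[f]\geq F$ for every $F$, so the denominator grows at least linearly and in particular diverges to $+\infty$ almost surely.

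Next I would split into two cases according to the sign of the numerator's partial sums along a subsequence. Fix a sample path in the probability-$1$ event on which \eqref{prob_1_server} holds. If $\sum_{f=0}^{F-1}X_n[f]\leq 0$ for all sufficiently large $F$, then the ratio is nonpositive for all large $F$ and the $\limsup$ is $\leq 0$ trivially. Otherwise, consider those (arbitrarily large) $F$ for which $\sum_{f=0}^{F-1}X_n[f]>0$; for such $F$ we may bound the denominator below by $F$ to get
\[
\frac{\sum_{f=0}^{F-1}X_n[f]}{\sum_{f=0}^{F-1}T_n[f]}\leq \frac{\sum_{f=0}^{F-1}X_n[f]}{F}.
\]
Taking $\limsup$ over this subsequence (or equivalently over all $F$, since the ratio is $\leq 0$ on the complementary indices) and invoking \eqref{prob_1_server} gives that the $\limsup$ of the ratio is at most $\limsup_{F\to\infty}\frac1F\sum_{f=0}^{F-1}X_n[f]\leq 0$, which is exactly the claim. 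The same conclusion can be phrased more slickly: since $\sum_{f=0}^{F-1}T_n[f]\geq F>0$, we always have
\[
\frac{\sum_{f=0}^{F-1}X_n[f]}{\sum_{f=0}^{F-1}T_n[f]}\leq \max\!\left\{0,\ \frac{1}{F}\sum_{f=0}^{F-1}X_n[f]\right\},
\]
and taking $\limsup$ of both sides together with \eqref{prob_1_server} finishes the argument.

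I do not anticipate a serious obstacle here; this is essentially a bookkeeping lemma. The one point to be careful about is that $X_n[f]$ is not sign-definite, so one cannot simply bound the ratio by replacing the denominator with its lower bound $F$ unconditionally — that inequality only goes the right way when the numerator's partial sum is nonnegative. Handling the (possibly many) indices where the partial sum is negative is what the $\max\{0,\cdot\}$ trick above takes care of cleanly. A secondary minor check is that the almost-sure event is the same one furnished by Lemma \ref{bounded_supMG}, so no new measure-zero exceptional set is introduced.
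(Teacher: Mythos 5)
Your proposal is correct and follows essentially the same route as the paper: both arguments reduce the ratio of sums to the Ces\`aro average from Lemma \ref{bounded_supMG} by using $T_n[f]\geq 1$, hence $\sum_{f=0}^{F-1}T_n[f]\geq F$; the paper handles the sign issue you flag via an $\epsilon$-argument (for large $F$, $\sum_{f=0}^{F-1}X_n[f]\leq\epsilon F\leq\epsilon\sum_{f=0}^{F-1}T_n[f]$, then $\epsilon\to 0$), which is equivalent to your $\max\{0,\cdot\}$ trick. No gap.
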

\begin{proof}
From \eqref{prob_1_server}, it follows for any $\epsilon>0$, there exists an $F_0(\epsilon)$ such that $F\geq F_0(\epsilon)$ implies
$\sum_{f=0}^{F-1}X_n[f]\left/\sum_{f=0}^{F-1}T_n[f]\right.\leq\epsilon\left/\frac{1}{F}\sum_{f=0}^{F-1}T_n[f]\right.\leq\epsilon$.
Thus,
$\limsup_{F\rightarrow\infty}\sum_{f=0}^{F-1}X_n[f]\left/\sum_{f=0}^{F-1}T_n[f]\right.\leq\epsilon$.
Since $\epsilon$ is arbitrary, take $\epsilon\rightarrow0$ gives the result.
\end{proof}

\subsection{Near optimal time average cost}
The ratio of time averages in corollary \ref{corollary_ratio_time_average} and the true time average share the same bound, which is proved by the following lemma:
\begin{lemma}\label{true_time_average}
The following time average is bounded,
\begin{align}
&\limsup_{T\rightarrow\infty}\frac{1}{T}\sum_{t=0}^{T-1}\left(V\left(W_n(t)+e_nH_n(t)+g_n(t)\right)-Q_n(t_f^{(n)})\right.\nonumber\\
&\left.(\mu_nH_n(t)-\overline{\mu}_n^*)+\left(t-t_f^{(n)}\right)B_0\right) \nonumber\\
&\leq V(\overline{W}_n^*+\overline{E}_n^*+\overline{G}_n^*)+\Psi_n,  \label{true_time_average_equation}
\end{align}
where $\Psi_n=\frac{B_0}{2}\frac{\expect{T^*_n[f](T^*_n[f]-1)}}{\expect{T^*_n[f]}}$ and $B_0=\frac{1}{2}(R_{\max}+\mu_{\max})\mu_{\max}$.
\end{lemma}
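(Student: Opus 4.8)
The plan is to recognize the slot-average in \eqref{true_time_average_equation} as (essentially) the frame-ratio already controlled by Corollary \ref{corollary_ratio_time_average}, and then to pay only a vanishing correction for passing from frame boundaries to a general horizon $T$; all statements below hold with probability $1$, inherited from Corollary \ref{corollary_ratio_time_average}. Introduce $\Theta_n:=V(\overline{W}_n^*+\overline{E}_n^*+\overline{G}_n^*)+\Psi_n$, let $\phi_n(t)$ be the summand appearing in \eqref{true_time_average_equation} (where $t_f^{(n)}$ is read as the start of the frame of server $n$ containing slot $t$), and set $S_n(T):=\sum_{t=0}^{T-1}\phi_n(t)$, so that \eqref{true_time_average_equation} is the claim $\limsup_{T\to\infty}S_n(T)/T\le\Theta_n$. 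Comparing the summand of $X_n[f]$ with $\phi_n(t)$, the two differ only by the frame-independent additive constant $\Theta_n$; summing over the slots of frame $f$ therefore gives $X_n[f]=\bigl(\sum_{t=t_f^{(n)}}^{t_{f+1}^{(n)}-1}\phi_n(t)\bigr)-\Theta_n T_n[f]$. Summing this over $f=0,\dots,F-1$ and using $\sum_{f=0}^{F-1}T_n[f]=t_F^{(n)}$ yields $\sum_{f=0}^{F-1}X_n[f]=S_n(t_F^{(n)})-\Theta_n t_F^{(n)}$, so Corollary \ref{corollary_ratio_time_average} reads
\begin{align*}
\limsup_{F\to\infty}\frac{S_n(t_F^{(n)})}{t_F^{(n)}}\le\Theta_n .
\end{align*}

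It remains to upgrade this bound along the subsequence $\{t_F^{(n)}\}$ to a $\limsup$ over all $T$. For $t_F^{(n)}\le T<t_{F+1}^{(n)}$, decompose
\begin{align*}
\frac{S_n(T)}{T}=\frac{S_n(t_F^{(n)})}{t_F^{(n)}}+\frac{S_n(t_F^{(n)})}{t_F^{(n)}}\!\left(\frac{t_F^{(n)}}{T}-1\right)+\frac{S_n(T)-S_n(t_F^{(n)})}{T}.
\end{align*}
Here $\bigl|t_F^{(n)}/T-1\bigr|\le T_n[F]/t_F^{(n)}$. The only negative contribution to $\phi_n(t)$ is $-Q_n(t_f^{(n)})\mu_nH_n(t)$, which by Lemma \ref{bounded_delay} is at least $-(Vc_{\max}+R_{\max})\mu_{\max}$; hence $\phi_n(t)$, and therefore $S_n(T)/T$, is bounded below by a constant, and combining this with the previous display shows $\bigl|S_n(t_F^{(n)})/t_F^{(n)}\bigr|\le C_3$ for all large $F$, for some constant $C_3$. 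Using boundedness of $W_n(t)$, $g_n(t)$, $e_nH_n(t)$ and (again Lemma \ref{bounded_delay}) $Q_n(t_f^{(n)})$, one has $|\phi_n(t)|\le C_1+B_0\,(t-t_f^{(n)})$ for a constant $C_1$, so $|S_n(T)-S_n(t_F^{(n)})|\le\sum_{k=0}^{T_n[F]-1}(C_1+B_0k)\le C_4\,T_n[F]^2$, and since $T\ge t_F^{(n)}$ the third term is at most $C_4 T_n[F]^2/t_F^{(n)}$. Thus for all large $F$ and every $T$ in frame $F$,
\begin{align*}
\frac{S_n(T)}{T}\le\frac{S_n(t_F^{(n)})}{t_F^{(n)}}+C_3\frac{T_n[F]}{t_F^{(n)}}+C_4\frac{T_n[F]^2}{t_F^{(n)}},
\end{align*}
and the lemma reduces to showing the two correction terms tend to $0$ as $F\to\infty$.

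The main obstacle, and the only point where Assumption \ref{bounded_moment_assumption} is truly used, is $T_n[F]^2/t_F^{(n)}\to0$ with probability $1$. Since $T_n[f]\le 1+I_{\max}+\tau_n[f]$ for every frame and, conditioned on the sleep mode chosen in frame $f$, $\tau_n[f]$ is distributed as $\hat{\tau}_n(\alpha_n)$ for some $\alpha_n$ in the finite set $\mathcal{L}_n$, the bounded fourth moments in Assumption \ref{bounded_moment_assumption} give $\sup_f\expect{T_n[f]^4}<\infty$. Hence $\sum_{F\ge1}\expect{T_n[F]^4}/F^2<\infty$, so $\sum_{F}T_n[F]^4/F^2<\infty$ and therefore $T_n[F]^4/F^2\to0$, i.e. $T_n[F]^2/F\to0$, almost surely; as each frame has length at least $1$, $t_F^{(n)}\ge F$, whence $T_n[F]^2/t_F^{(n)}\to0$ and a fortiori $T_n[F]/t_F^{(n)}\to0$. (A second-moment bound would give only $T_n[F]/F\to0$, too weak to absorb the $B_0(t-t_f^{(n)})$ mass accumulated within a frame --- this is precisely why the fourth moment is assumed.) Letting $T\to\infty$ (hence $F\to\infty$) in the last display and using $\limsup(a_F+b_F)\le\limsup a_F+\limsup b_F$ now gives $\limsup_{T\to\infty}S_n(T)/T\le\limsup_{F\to\infty}S_n(t_F^{(n)})/t_F^{(n)}\le\Theta_n$, which is \eqref{true_time_average_equation}.
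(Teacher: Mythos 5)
Your proof is correct and follows essentially the same route as the paper's Appendix C: reduce the slot average to the frame-boundary average via Corollary \ref{corollary_ratio_time_average} (your identity $\sum_{f=0}^{F-1}X_n[f]=S_n(t_F^{(n)})-\Theta_n t_F^{(n)}$ makes this reduction explicit), then show the within-frame remainder, of order $T_n[F]^2/t_F^{(n)}$, vanishes almost surely using the bounded fourth moments of the setup time. The only cosmetic differences are your exact decomposition of $S_n(T)/T$ in place of the paper's $\max\{a[F],b[F]\}$ bookkeeping, and deducing $T_n[F]^2/F\to 0$ from $\sum_F \expect{T_n[F]^4}/F^2<\infty$ directly rather than via Markov's inequality and Borel--Cantelli; both are equivalent in substance.
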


The idea of the proof is similar to that of basic renewal theory, which derives upper and lower bounds for each $T$ within any frame $F$ using corollary \ref{corollary_ratio_time_average}, thereby showing that as $T\rightarrow\infty$, the upper and lower bounds meet. See appendix C for details. With the help of this lemma, we are able to prove the following near optimal performance theorem:

\begin{theorem}\label{theorem_near_optimal_perform}
If $Q_n(0)=0,\forall n\in\mathcal{N}$, then the time average total cost under the algorithm is near optimal on the order of $\mathcal{O}(1/V)$, i.e.
\begin{align}\label{near_optimal_perform}
&\limsup_{T\rightarrow\infty}\frac{1}{T}\sum_{t=0}^{T-1}\left(c(t)d(t)+\sum_{n=1}^{N}\left(W_n(t)+e_nH_n(t)+g_n(t)\right)\right)\nonumber\\
&\leq\underbrace{\overline{C}^*+\sum_{n=1}^N\left(\overline{W}_n^*+\overline{E}_n^*+\overline{G}_n^*\right)}_\text{Optimal cost}+\frac{\sum_{n=1}^N\Psi_n+B_3}{V},
\end{align}
where $B_3\triangleq\frac{1}{2}\sum_{n=1}^N(R_{\max}+\mu_n)^2$, $\Psi_n=\frac{B_0}{2}\frac{\expect{T^*_n[f](T^*_n[f]-1)}}{\expect{T^*_n[f]}}$ and $B_0=\frac{1}{2}(R_{\max}+\mu_{\max})\mu_{\max}$.
\end{theorem}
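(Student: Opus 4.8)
The plan is to combine the two probability-1 bounds from Lemma \ref{bounded_supMG} (as refined by Corollary \ref{corollary_ratio_time_average} and Lemma \ref{true_time_average}) with the queue boundedness from Lemma \ref{bounded_delay}, and then let $V$ absorb the residual queue terms. First I would sum the statement of Lemma \ref{true_time_average} over all $n\in\mathcal{N}$, producing
\[
\limsup_{T\to\infty}\frac1T\sum_{t=0}^{T-1}\sum_{n=1}^N\Bigl(V(W_n(t)+e_nH_n(t)+g_n(t))-Q_n(t_f^{(n)})(\mu_nH_n(t)-\overline\mu_n^*)+(t-t_f^{(n)})B_0\Bigr)\le V\sum_{n=1}^N(\overline W_n^*+\overline E_n^*+\overline G_n^*)+\sum_{n=1}^N\Psi_n.
\]
In parallel, from \eqref{prob_1_front_end} and the definition of $Z[t]$ I get $\limsup_{T\to\infty}\frac1T\sum_{t=0}^{T-1}\bigl(V(c(t)d(t)-\overline C^*)+\sum_n Q_n(t)(R_n(t)-\overline R_n^*)\bigr)\le0$. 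Adding these two and noting $(t-t_f^{(n)})B_0\ge0$ so that term can be dropped from the left side, I obtain an inequality whose left side contains $V\cdot(\text{actual cost})$ plus cross terms of the form $\sum_n Q_n(t)R_n(t)-\sum_n Q_n(t_f^{(n)})\mu_nH_n(t)$, and whose right side is $V\cdot(\text{optimal cost})+\sum_n\Psi_n$ together with $-V\overline C^*$ and $V\sum_n Q_n(t)\overline R_n^*$-type terms; the $\overline R_n^*\le\overline\mu_n^*$ relation \eqref{obj_4} for the stationary policy must be invoked to cancel the starred rate terms.

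The crux is handling the queue-weighted drift terms $\sum_n\bigl(Q_n(t)R_n(t)-Q_n(t_f^{(n)})\mu_nH_n(t)\bigr)$. The idea is a standard telescoping/drift argument: consider $L(t)=\frac12\sum_n Q_n(t)^2$ and bound $L(t+1)-L(t)$ using the queue dynamics \eqref{queue_update}, which yields $L(t+1)-L(t)\le\sum_n Q_n(t)(R_n(t)-\mu_n(t)H_n(t))+\frac12\sum_n(R_{\max}+\mu_n)^2$; the constant here is exactly $B_3$. Summing over $t$, dividing by $T$, and using $L(0)=0$, $L(T)\ge0$ gives $\frac1T\sum_{t=0}^{T-1}\sum_n Q_n(t)(R_n(t)-\mu_n(t)H_n(t))\ge-B_3$ — but I actually need it the other way, as an upper bound on $-\sum_n Q_n(t)(\cdots)$, i.e. $\frac1T\sum_t\sum_n Q_n(t)(\mu_n(t)H_n(t)-R_n(t))\le B_3+o(1)$ after also controlling the gap between $Q_n(t)$ and $Q_n(t_f^{(n)})$ and between $\mu_n(t)H_n(t)$ and $\mu_nH_n(t)$. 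The difference $Q_n(t)-Q_n(t_f^{(n)})$ is bounded and, once divided by $T$ and multiplied against the bounded $\mu_n,R_n$ and the $O(V)$ queue bound from Lemma \ref{bounded_delay}, contributes only lower-order terms; the difference between the random $\mu_n(t)$ and its mean $\mu_n$ washes out by the law of large numbers applied frame-wise (this is precisely why the $B_0$ penalty and $\Psi_n$ were built into $D_n[f]$ in the first place, so I expect Lemma \ref{true_time_average} already packages most of it).

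Putting the pieces together: after the cancellations I will have $V\cdot\limsup_T\frac1T\sum_t\bigl(c(t)d(t)+\sum_n(W_n(t)+e_nH_n(t)+g_n(t))\bigr)\le V\cdot(\overline C^*+\sum_n(\overline W_n^*+\overline E_n^*+\overline G_n^*))+\sum_n\Psi_n+B_3$, and dividing through by $V>0$ yields exactly \eqref{near_optimal_perform}. I would close by noting that all ``$\limsup\le0$'' statements hold with probability 1 (from Lemma \ref{bounded_supMG}), so the final bound holds with probability 1 as claimed.

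The main obstacle I anticipate is bookkeeping rather than conceptual: making sure every starred-rate term ($\overline C^*,\overline R_n^*,\overline\mu_n^*$, and the $(t-t_f^{(n)})B_0$ and $\Psi_n$ pieces) introduced by comparing to the stationary policy cancels cleanly against the matching term generated by the Lyapunov drift of $L(t)$, and that the mismatch between $Q_n(t)$ evaluated at slot $t$ versus at the frame start $t_f^{(n)}$ — which appears in the $X_n[f]$ process but not in the drift of $L(t)$ — is genuinely $o(T)$ given only the $O(V)$ bound on queue size and the bounded frame lengths (Remark after \eqref{iid_d}). This is where one must be careful that $I_{\max}<\infty$ and the setup-time moments (Assumption \ref{bounded_moment_assumption}) are actually used.
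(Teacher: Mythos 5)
Your proposal is correct and follows essentially the same route as the paper's Appendix D: a drift-plus-penalty argument with the quadratic Lyapunov function $\frac{1}{2}\sum_n Q_n(t)^2$ (yielding the same constant $B_3$), the substitution $\overline{R}_n^*\leq\overline{\mu}_n^*$, the front-end bound \eqref{prob_1_front_end}, and Lemma \ref{true_time_average} to absorb the $Q_n(t)$-versus-$Q_n(t_f^{(n)})$ mismatch into the $(t-t_f^{(n)})B_0$ and $\Psi_n$ terms. The only cosmetic difference is that you assemble the pieces by summing the probability-1 bounds first and then telescoping the drift, whereas the paper bounds the per-slot expression $P(t)$ first and then takes time averages; the cancellations are identical.
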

The proof uses Lyapunov drift analysis. See appendix D for details of proof.

\section{Delay improvement via virtualization}\label{section_improve}
\subsection{Delay improvement}
The algorithm in previous sections optimizes time average cost. However, it can route requests to idle queues, which increases system delay. This section considers an improvement in the algorithm that maintains the same average cost guarantees, but reduces delay. This is done by a ``virtualization'' technique that reduces from $N$ server request queues to only one request queue $Q(t)$. Specifically, the same Algorithm 1 is run, with queue updates \eqref{queue_update} for each of the $N$ queues $Q_n(t)$. However, the $Q_n(t)$ processes are now virtual queues rather than actual queues: Their values are only kept in software. Every slot $t$, the data center observes the incoming requests $\lambda(t)$, rejection cost $c(t)$ and virtual queue values, making rejection decision according to \eqref{reject_decision} as before. The admitted requests are queued in $Q(t)$. Meanwhile, each server $n$ makes active/idle decisions observing its own virtual queue $Q_n(t)$ same as before. Whenever a server is active, it grabs the requests from request queue $Q(t)$ and serves them.
This results in an actual queue updating for the system:
\begin{equation}\label{actual_queue_update}
Q(t+1)=\max\left\{Q(t)+\lambda(t)-d(t)-\sum_{n=1}^N\mu_n(t)H_n(t),~0\right\}.
\end{equation}

Fig. \ref{fig:Stupendous1} shows this data center architecture.
\begin{figure}[htbp]
   \centering
   \includegraphics[height=2in]{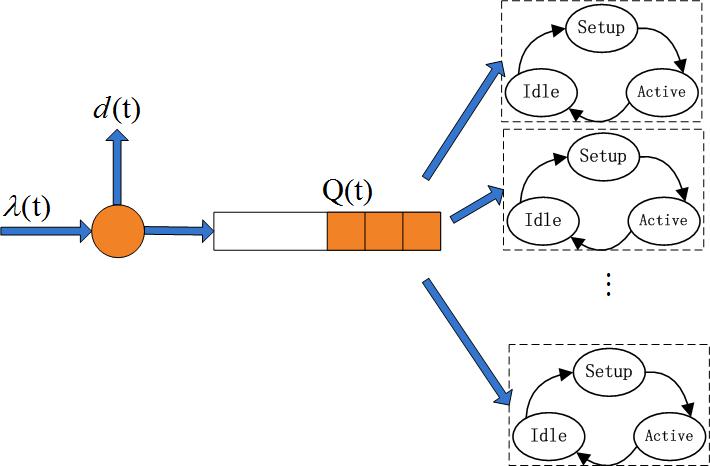} % requires the graphicx package
   \caption{Illustration of basic data center architecture.}
   \label{fig:Stupendous1}
\end{figure}

\subsection{Performance guarantee}
Since this algorithm does not look at the actual queue $Q(t)$, it is not clear whether or not the actual request queue would be stabilized under the proposed algorithm. The following lemma answers the question. For simplicity, we call the system with $N$ queues, where our algorithm applies, the virtual system, and call the system with only one queue the actual system.
\begin{lemma}\label{actual_queue_bound}
If $Q(0)=0$ and $Q_n(0)=0,~\forall n\in\mathcal{N}$, then the virtualization technique stabilizes the queue $Q(t)$ with the bound:
$Q(t)\leq N(Vc_{\max}+R_{\max})$.
\end{lemma}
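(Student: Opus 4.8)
The plan is to compare the actual queue $Q(t)$ with the sum of the virtual queues $\sum_{n=1}^N Q_n(t)$ and show by induction that $Q(t)\le \sum_{n=1}^N Q_n(t)$ for all $t$; combined with the per-queue bound $Q_n(t)\le Vc_{\max}+R_{\max}$ from Lemma \ref{bounded_delay}, this yields $Q(t)\le N(Vc_{\max}+R_{\max})$ immediately. The base case holds since $Q(0)=0=\sum_n Q_n(0)$. For the inductive step, I would assume $Q(t)\le\sum_n Q_n(t)$ at slot $t$ and compare the two update rules. The actual update \eqref{actual_queue_update} adds $\lambda(t)-d(t)$ to $Q(t)$ (capped at zero after subtracting total service), and by the admission constraint \eqref{obj_5} we have $\lambda(t)-d(t)=\sum_{n=1}^N R_n(t)$, so the \emph{same} total amount $\sum_n R_n(t)$ of new work enters the virtual system (distributed among the $Q_n(t)$) as enters $Q(t)$. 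Likewise the same total potential service $\sum_{n=1}^N\mu_n(t)H_n(t)$ is subtracted in both systems.

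The one subtlety is that the $\max\{\cdot,0\}$ truncations happen queue-by-queue in the virtual system but only once, on the aggregate, in the actual system, and a virtual queue $Q_n(t)$ may ``waste'' service when it is nearly empty while the actual queue $Q(t)$ still has backlog to serve. The key inequality to establish is
\[
\sum_{n=1}^N \max\{Q_n(t)+R_n(t)-\mu_n(t)H_n(t),\,0\}\;\ge\;\max\Big\{\sum_{n=1}^N\big(Q_n(t)+R_n(t)-\mu_n(t)H_n(t)\big),\,0\Big\},
\]
which is just the elementary fact that a sum of nonnegative truncations dominates the truncation of the sum (since $\max\{a,0\}+\max\{b,0\}\ge\max\{a+b,0\}$). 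Using $\sum_n R_n(t)=\lambda(t)-d(t)$, the right-hand side equals $\max\{\sum_n Q_n(t)+\lambda(t)-d(t)-\sum_n\mu_n(t)H_n(t),\,0\}$, which by the inductive hypothesis $Q(t)\le\sum_n Q_n(t)$ and monotonicity of $x\mapsto\max\{x+\text{const},0\}$ is at least $\max\{Q(t)+\lambda(t)-d(t)-\sum_n\mu_n(t)H_n(t),\,0\}=Q(t+1)$. Chaining these gives $\sum_n Q_n(t+1)\ge Q(t+1)$, completing the induction.

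I expect the main obstacle to be purely one of careful bookkeeping: verifying that the rejection decision $d(t)$ and the per-server routing amounts $R_n(t)$ computed from the virtual queues in the virtualized algorithm still satisfy the constraint $\lambda(t)-d(t)=\sum_n R_n(t)$ (they do, since Algorithm 1's routing/rejection rule is unchanged and only reads the virtual queue values), and that the service subtraction $\sum_n\mu_n(t)H_n(t)$ is genuinely common to both systems (it is, since each server's active/idle decision depends only on its own virtual queue). Once these identifications are in place, the truncation inequality above does all the work and no further estimates are needed.
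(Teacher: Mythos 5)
Your proposal is correct and follows essentially the same route as the paper's proof: induction on the invariant $Q(t)\le\sum_{n=1}^N Q_n(t)$, using the fact that both systems see identical admissions $\lambda(t)-d(t)=\sum_n R_n(t)$ and identical service $\sum_n\mu_n(t)H_n(t)$. The only cosmetic difference is that your single truncation inequality $\sum_n\max\{a_n,0\}\ge\max\{\sum_n a_n,0\}$ together with monotonicity subsumes the paper's explicit two-case split (aggregate queue empties versus not), so no further comment is needed.
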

\begin{proof}
Notice that this bound is $N$ times the individual queue bound in lemma \ref{bounded_delay}, we prove the lemma by showing that the sum-up weights $\sum_{n=1}^NQ_n(t)$ in the virtual system always dominates the queue length $Q(t)$. We prove this by induction. The base case is obvious since $Q(0)=\sum_{n=1}^NQ_n(0)=0$. Suppose at the beginning of time $t$,
$Q(t)\leq\sum_{n=1}^NQ_n(t)$,
then, during time $t$, we distinguish between the following two cases:
\begin{enumerate}
  \item Not all active servers in actual system have requests to serve. This case happens if and only if there are not enough requests in $Q(t)$ to be served, i.e.
      $\lambda(t)-d(t)+Q(t) < \sum_{n=1}^N\mu_n(t)H_n(t)$.
      Thus, according to queue updating rule \eqref{actual_queue_update}, at the beginning of time slot $t+1$, there will be no request sitting in the actual queue, i.e. $Q(t+1)=0$. Hence, it is guaranteed that
      $Q(t+1)\leq\sum_{n=1}^NQ_n(t+1)$.
  \item All active servers in actual system have requests to serve. Notice that the virtual system and the actual system have exactly the same arrivals, rejections and server active/idle states. Thus, the following holds,
      $Q(t+1)=Q(t)+\lambda(t)-d(t)-\sum_{n=1}^N\mu_n(t)H_n(t)$
            $\leq \sum_{n=1}^NQ_n(t)+\sum_{n=1}^NR_n(t)-\sum_{n=1}^N\mu_n(t)H_n(t)$
            $\leq \sum_{n=1}^N\max\{Q_n(t)+R_n(t)-\mu_n(t)H_n(t),~0\}$
            $= \sum_{n=1}^NQ_n(t+1)$,
      where the first inequality follows from induction hypothesis as well as the fact that $\sum_{n=1}^NR_n(t)=\lambda(t)-d(t)$.
\end{enumerate}
Above all, we proved $Q(t)\leq\sum_{n=1}^NQ_n(t),~\forall t$. Since each $Q_n(t)\leq Vc_{\max}+R_{\max},~\forall t$, the lemma follows.
\end{proof}

Since the virtual system and the actual system have exactly the same cost, and it can be shown that the optimal cost in one queue system is lower bounded by the optimal cost in $N$ queue system, thus, the near optimal performance is still guaranteed.

\section{Simulation}\label{section_simulation}
\revise{
In this section, we demonstrate the performance of our proposed algorithm via extensive simulations. The first simulation runs over i.i.d. traffic which fits into the assumption made in Section \ref{section_problem_formulation}. We show that our algorithm indeed achieves $\mathcal{O}(1/V)$ near optimality with $\mathcal{O}(V)$ delay ($[\mathcal{O}(1/V), \mathcal{O}(V)]$ trade-off), which is predicted by Lemma \ref{bounded_delay} and Theorem \ref{theorem_near_optimal_perform}. We then apply our algorithm to a real data center traffic trace with realistic scale, setup time and cost being the power consumption. We compare the performance of the proposed algorithm with several other heuristic algorithms and show that our algorithm indeed delivers lower delay and saves power.
}

\subsection{Near optimality in $N$ queues system}
In the first simulation, we consider a relative small scale problem with i.i.d. generated traffic.
We set the number of servers $N=5$. The incoming requests $\lambda(t)$ are integers following a uniform distribution in $[10,30]$. The request rejecting cost $c(t)$ are also integers following a uniform distribution in $[1,6]$. The maximum admission amount $R_{\max}=40$ and the maximum idle time $I_{\max}=1000$.
There is only one idle option $\alpha_n$ for each server where the idle cost $\hat{g}(\alpha_n)=0$.
The setup time follows a geometric distribution with mean $\expect{\hat{\tau}(\alpha_n)}$, setup cost $\hat{W}_n(\alpha_n)$ per slot, service cost $e_n$ per slot, and the service amount $\mu_n$ follows a uniform distribution over integers. The values $1/\expect{\hat{\tau}(\alpha_n)}$ are generated uniform at random within $[0,1]$ and specified in table II.

The algorithm is run for 1 million slots in each trial and each plot takes the average of these 1 million slots. We compare our algorithm to the optimal stationary algorithm. The optimal stationary algorithm is computed using linear program \cite{LP_MDP} with the full knowledge of the statistics of requests and rejecting costs.

\begin{table}
\begin{center}
\caption{Problem parameters}
\begin{tabular}{|l|l|l|l|l|}
  \hline
  % after \\: \hline or \cline{col1-col2} \cline{col3-col4} ...
   Server & $\mu_n$ & $e_n$ & $\hat{W}_n(\alpha_n)$ & $\expect{\hat{\tau}(\alpha_n)}$ \\
  1 & $\{2,3,4,5,6\}$ & 4 & 2 & 5.893 \\
  2 & $\{2,3,4\}$ & 2 & 3 & 4.342 \\
  3 & $\{2,3,4\}$ & 3 & 3 & 27.397 \\
  4 & $\{1,2,3\}$ & 4 & 2 & 5.817 \\
  5 & $\{2,3,4\}$ & 2 & 4 & 6.211 \\
  \hline
\end{tabular}
\end{center}
\end{table}

In Fig. \ref{fig:Stupendous2}, we show that as our tradeoff parameter $V$ gets
larger, the average cost approaches the optimal value and
achieves a near optimal performance. \revise{Furthermore, the cost curve drops rapidly when $V$ is small and becomes relatively flat when $V$ gets large, thereby demonstrates our $\mathcal{O}(1/V)$ optimality gap in Theorem \ref{theorem_near_optimal_perform}.} 
Fig. \ref{fig:Stupendous3} plots the average sum-up queue size $\sum_{n=1}^5Q_n(t)$
and shows as $V$ gets larger, the average sum-up queue size becomes larger. We also plot the sum of individual queue bound
from Lemma \ref{bounded_delay} for comparison. \revise{We can see that the real queue size grows linearly with $V$ (although the constant in Lemma \ref{bounded_delay} is not tight due to the much better delay we obtain here), which demonstrates the $\mathcal{O}(V)$ delay bound.}

\begin{figure}[htbp]
   \centering
   \includegraphics[height=2in]{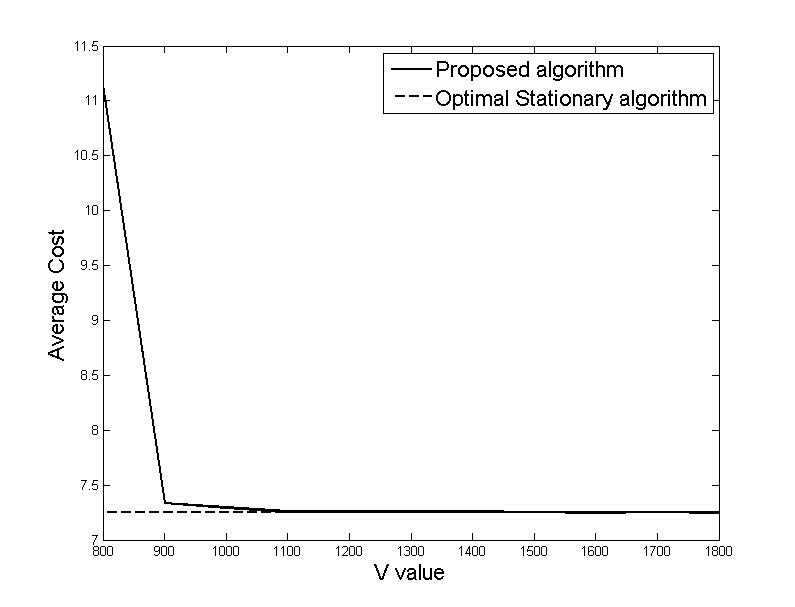} % requires the graphicx package
   \caption{Time average cost verses $V$ parameter over 1 millon slots.}
   \label{fig:Stupendous2}
\end{figure}

\begin{figure}[htbp]
   \centering
   \includegraphics[height=2in]{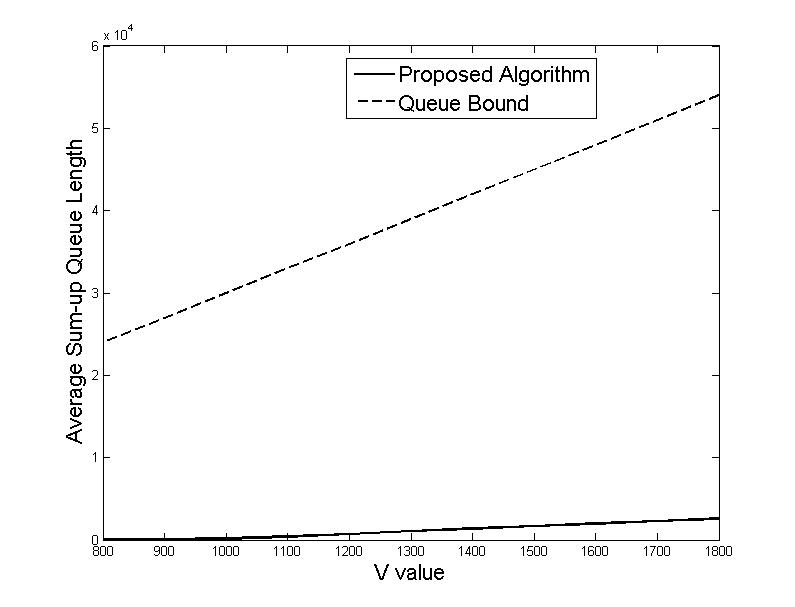} % requires the graphicx package
   \caption{Time average sum-up request queue length verses $V$ parameter over 1 millon slots.}
   \label{fig:Stupendous3}
\end{figure}

We then tune the requests $\lambda(t)$ to be uniform in $[20,40]$ and keep other parameters unchanged. In Fig. \ref{fig:Stupendous4}, we see that since the request rate gets larger, we need $V$ to be larger in order to obtain the near optimality, but still, the near optimality gap scales roughly $\mathcal{O}(1/V)$. Fig. \ref{fig:Stupendous5} gives the sum-up average queue length in this case. The average queue length is larger than that of Fig. \ref{fig:Stupendous3} with linear growth with respect to $V$.

\begin{figure}[htbp]
   \centering
   \includegraphics[height=2in]{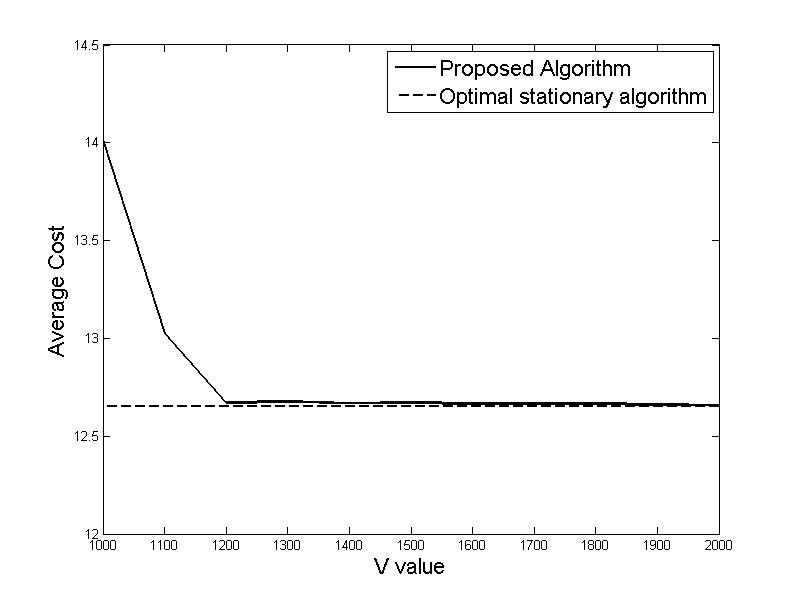} % requires the graphicx package
   \caption{Time average cost verses $V$ parameter over 1 millon slots.}
   \label{fig:Stupendous4}
\end{figure}

\begin{figure}[htbp]
   \centering
   \includegraphics[height=2in]{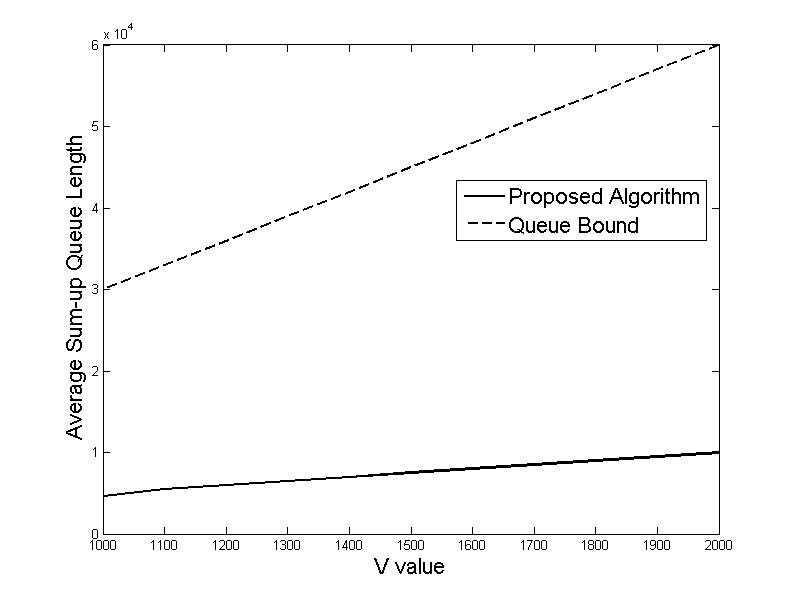} % requires the graphicx package
   \caption{Time average cost verses $V$ parameter over 1 millon slots.}
   \label{fig:Stupendous5}
\end{figure}

\subsection{Real data center traffic trace and performance evaluation}
This second considers a simulation on a real data center traffic obtained from the open source data sets(\cite{real-data-2}) of the paper \cite{real-data-1}. The trace is plotted in Fig. \ref{fig:trace}. We synthesis different data chunks from the source so that the trace contains both the steady phase and increasing phase. The total time duration is 2800 seconds with each slot equal to 20ms. The peak traffic is 2120 requests per 20 ms, and the time average traffic over this whole time interval is 654 requests per 20 ms.

We consider a data center consists of 1060 homogeneous servers. We assume each server has only one sleep state and the service quantity of each server at each slot follows a Zipf's law\footnote{The pdf of Zipf's law with parameter $K,p$ is defined as:
$f(n; K,p)=\frac{1/n^p}{\sum_{i=1}^K1/i^p}$. Thus, the mean of the distribution is $\frac{\sum_{i=1}^K1/i^{p-1}}{\sum_{i=1}^K1/i^p}$.} with parameter $K=10$ and $p=1.9$. This gives the service rate of each server equal to $1.9933\approx2$ requests per 20ms. So the full capacity of the data center is able to support the peak traffic.
Zipf's law is previous introduced to model a wide scope of physics, biology, computer science and social science phenomenon(\cite{Zipf-law}), and is adopted by various literatures to simulate the empirical data center service rate(\cite{Gandhi_phd_thesis, sleep-mode-2}). 
The setup time of each server is geometrically distributed with success probability equal to $0.001$. This gives the mean setup time 1000 slots (20 seconds). This setup time is previously shown in \cite{sleep-mode-2} to be a typical time duration for a desktop to recover from the suspend or hibernate state. 

Furthermore, to make a fair comparison with several existing algorithms, we enforce the front end balancer to accept all requests at each time slot (so the rejection rate is always 0). The only cost in the system is then the power consumption. We assume that a server consumes 10 W each slot when active and 0 W each slot when idle. The setup cost is also 10 W per slot. Moreover, we apply the one queue model described in Section \ref{section_improve} for all the rest of the simulations. Following the problem formulation, the maximum idle time of a server for the proposed algorithm is $I_{\max}=5000$, while no such limit is imposed for any other benchmark algorithms.

We first run our proposed algorithm over the trace with virtualization (in Section \ref{section_improve}) for different $V$ values. We set the initial virtual queue backlog $Q_n(0)=2000~\forall n$, and keep 20 servers always on.
 Fig. \ref{fig:costV} and Fig. \ref{fig:queueV} plots the running average power consumption and corresponding queue length for $V=400,~600,~800$ and 1200, respectively. It can be seen that as $V$ gets large, the average power consumption does not improve too much but the queue length changes drastically. This phenomenon results from the $[\mathcal{O}(1/V), \mathcal{O}(V)]$ trade-off of our proposed algorithm. In view of this fact, we choose $V=600$ which gives a reasonable delay performance in Fig. \ref{fig:queueV}.

\begin{figure}[htbp]
   \centering
   \includegraphics[height=2in]{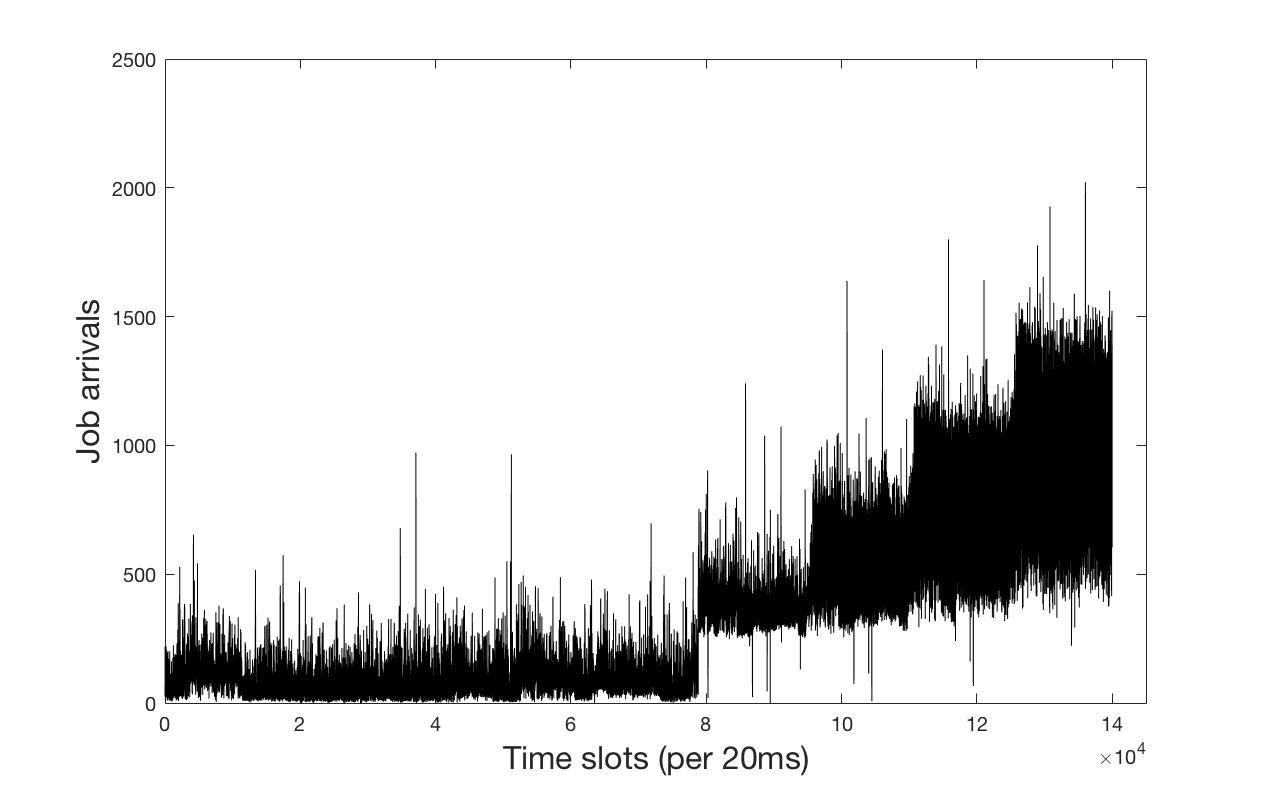} % requires the graphicx package
   \caption{Synthesized traffic trace from \cite{real-data-2}.}
   \label{fig:trace}
\end{figure}

\begin{figure}[htbp]
   \centering
   \includegraphics[height=2in]{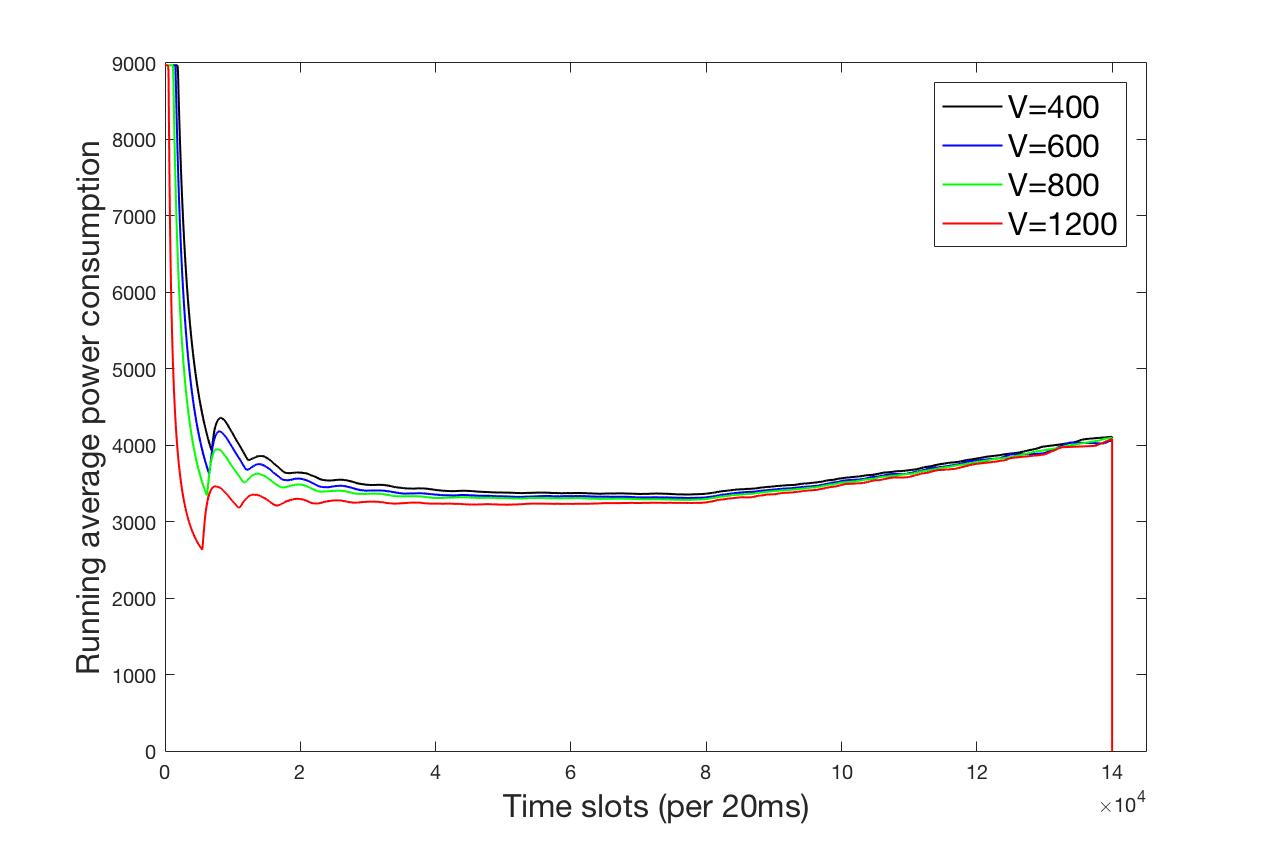} % requires the graphicx package
   \caption{Running average power consumption from slot 1 to the current slot for different $V$ value.}
   \label{fig:costV}
\end{figure}

\begin{figure}[htbp]
   \centering
   \includegraphics[height=2in]{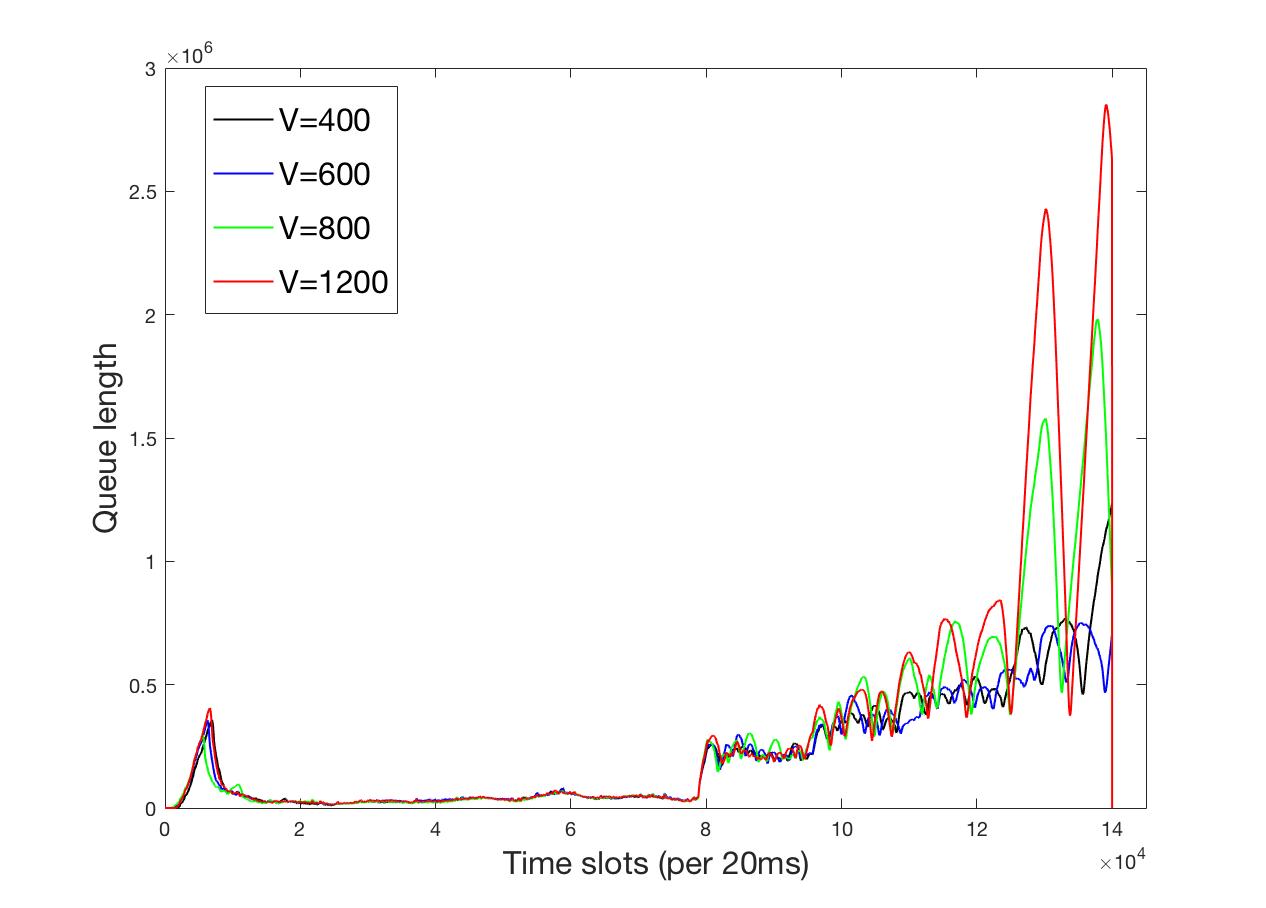} % requires the graphicx package
   \caption{Instantaneous queue length for different $V$ value.}
   \label{fig:queueV}
\end{figure}

Next, we compare our proposed algorithm with the same initial setup and $V=600$ to the following algorithms:
\begin{itemize}
\item Always-on with $N=327$ active servers and the rest servers staying on the sleep mode. Note that 327 servers can support the average traffic over the whole interval which is 654 requests per 20 ms. 
\item Always-on with full capacity. This corresponds to keeping all 1060 servers on at every slot.
\item Reactive. This algorithm is developed in \cite{sleep-mode-2} which reacts to the current traffic     $\overline{\lambda}(t)$ and maintains $k_{react}(t)=\left\lceil\overline{\lambda}(t)/2\right\rceil$ servers on. In the simulation, we choose $\overline{\lambda}(t)$ to be the average of the traffic from the latest 10 slots. If the current active server $k(t)>k_{react}(t)$, then, we turn $k(t)-k_{react}(t)$ servers off, otherwise, we turn $k_{react}(t)-k(t)$ servers to the setup state.
\item Reactive with extra capacity. This algorithm is similar to Reactive except that we introduce a virtual traffic flow of $p$ jobs per slot. So during each time slot $t$, the algorithm maintains $k_{react}(t)=\left\lceil(\overline{\lambda}(t)+p)/2\right\rceil$ servers on. 
\end{itemize}
Fig. \ref{fig:Stupendous7}-\ref{fig:Stupendous9} plots the average power consumption, queue length and the number of active servers, respectively. It can be seen that all algorithms perform pretty well during first half of the trace. For the second half of the trace, the traffic load is increasing. The Always-on algorithm with mean capacity does not adapt to the traffic so the queue length blows up quickly.
Because of the long setup time, the number of active servers in the Reactive algorithm fails to catch up with the increasing traffic so the queue length also blows up. Our proposed algorithm minimizes the power consumption while stabilizing the queues, thereby outperforms both the Always-on and the Reactive algorithm. Note that the Reactive with extra 200 job capacity is able to achieve a similar delay performance as our proposed algorithm, but with significant extra power consumption.

\begin{figure}[htbp]
   \centering
   \includegraphics[height=2.3in]{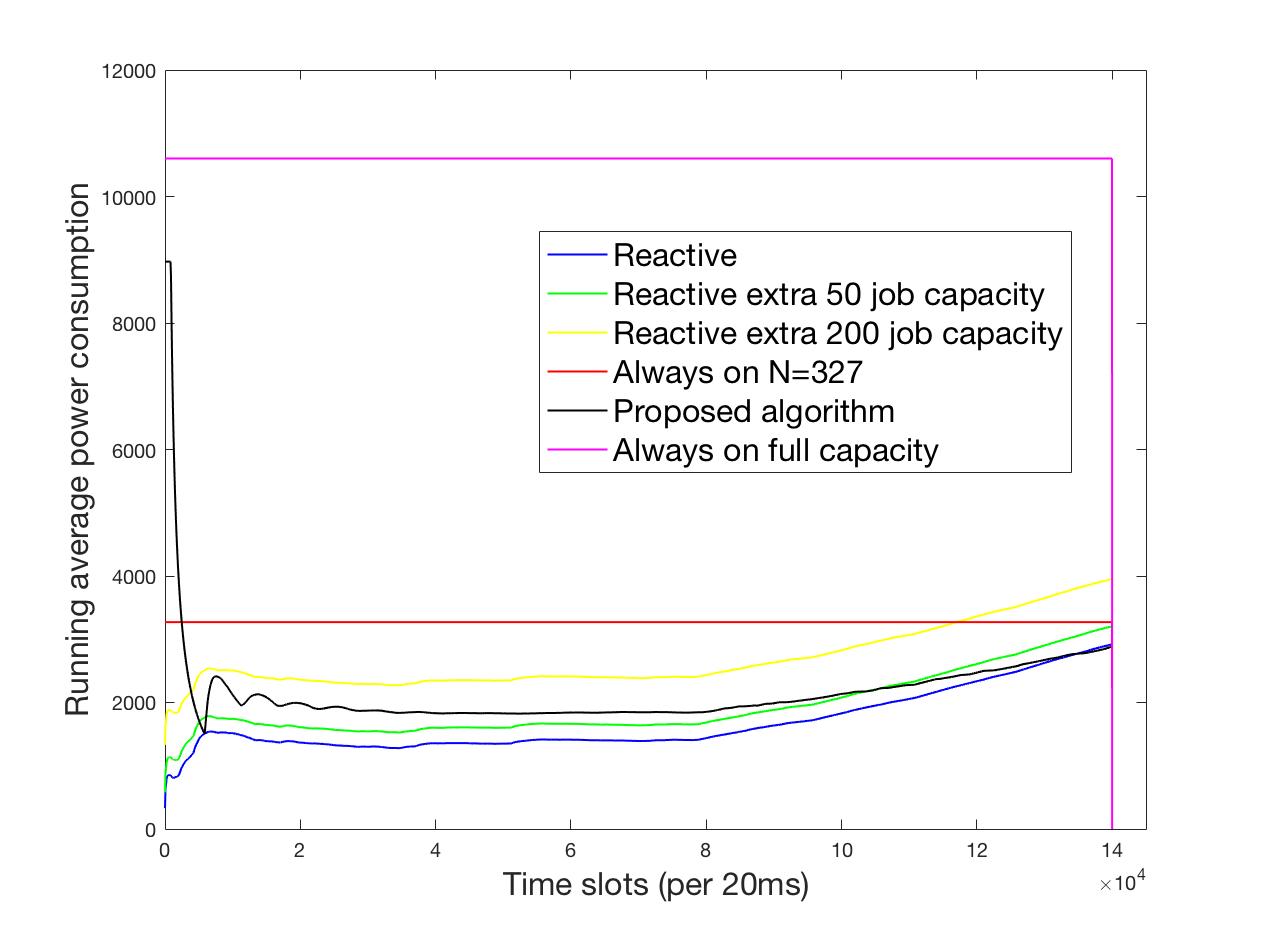} % requires the graphicx package
   \caption{Running average power consumption from slot 1 to the current slot for different algorithms.}
   \label{fig:Stupendous7}
\end{figure}

\begin{figure}[htbp]
   \centering
   \includegraphics[height=2in]{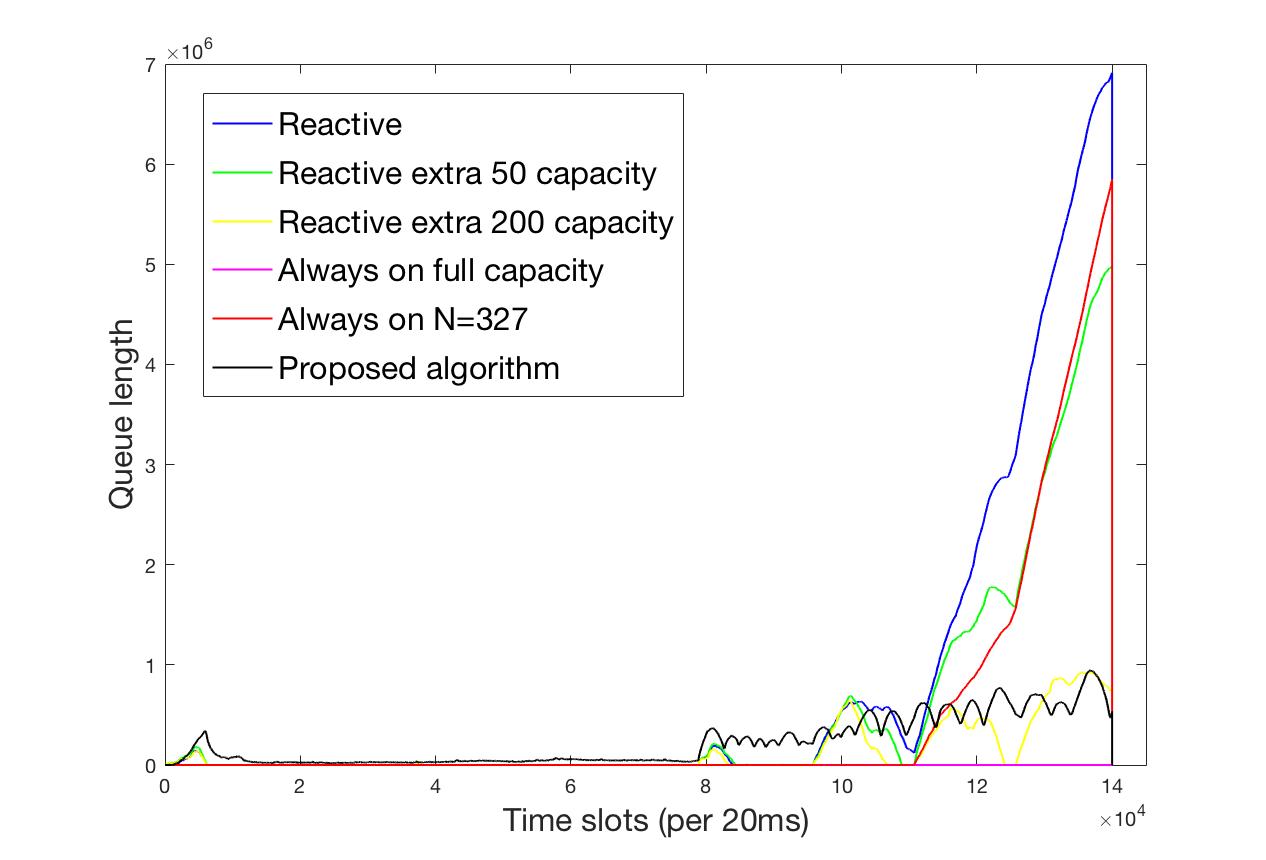} % requires the graphicx package
   \caption{Instantaneous queue length for different algorithms.}
   \label{fig:Stupendous8}
\end{figure}

\begin{figure}[htbp]
   \centering
   \includegraphics[height=2in]{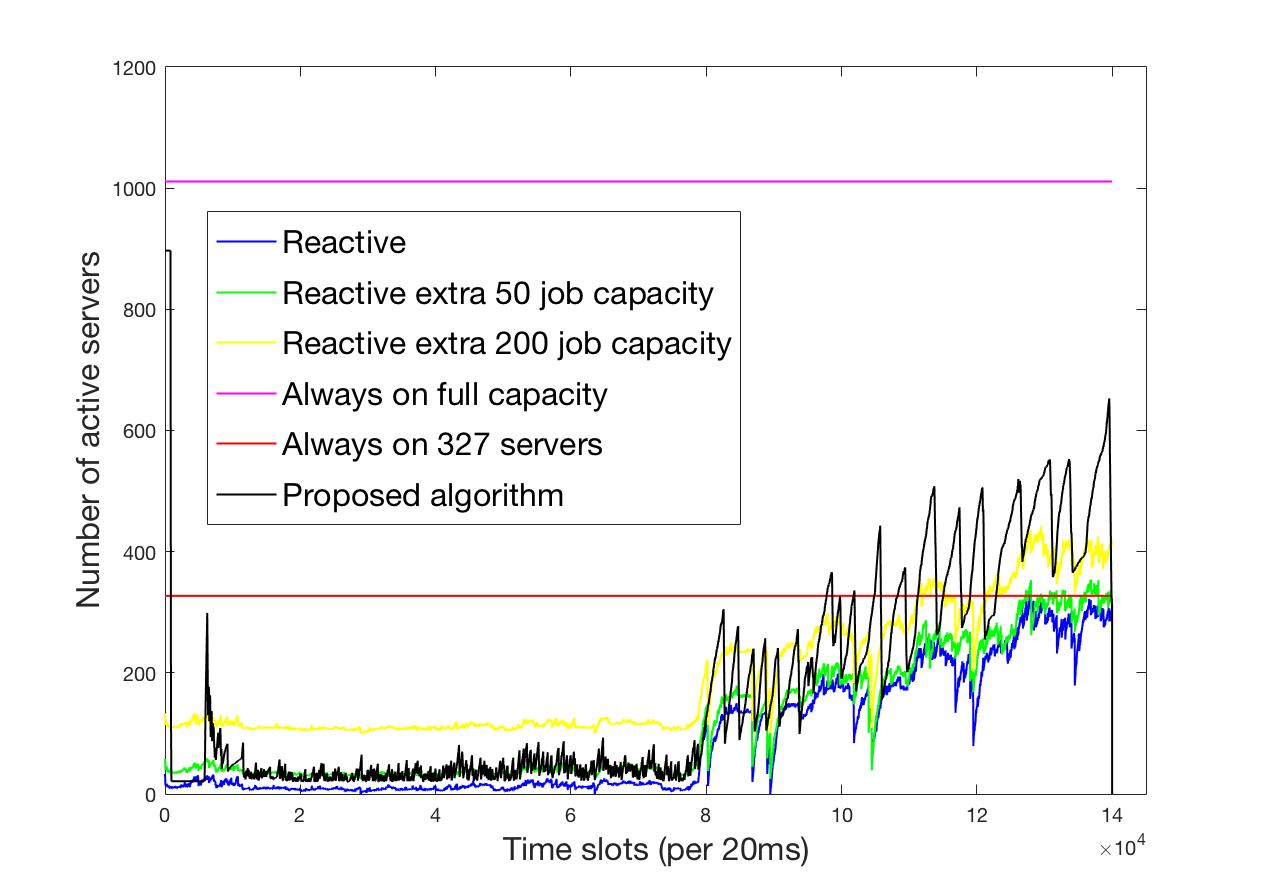} % requires the graphicx package
   \caption{Number of active servers over time.}
   \label{fig:Stupendous9}
\end{figure}

Finally, we evaluate the influence of different sleep modes on the performance. We keep all the setups the same as before and consider the sleep modes with sleep power consumption  equal to 2 W and 4 W per slot, respectively. Since the Always-on and the Reactive algorithm do not look at the sleep power consumption, their decisions remain the same as before, thus, we superpose the queue length of our proposed algorithm onto the previous Fig. \ref{fig:Stupendous8} and get the queue length comparison in Fig. \ref{fig:queue-length-S}. We see from the plot that increasing the power consumption during the sleep mode only slightly increases the queue length of our proposed algorithm. Fig. \ref{fig:sleep-mode} plots the running average power consumption under different sleep modes. Despite spending more power on the sleep mode, the proposed algorithm can still save considerable amount of power compared to other algorithms while keeping the request queue stable. This shows that our algorithm is empirically robust to the change of sleep mode.
\begin{figure}[htbp]
   \centering
   \includegraphics[height=2in]{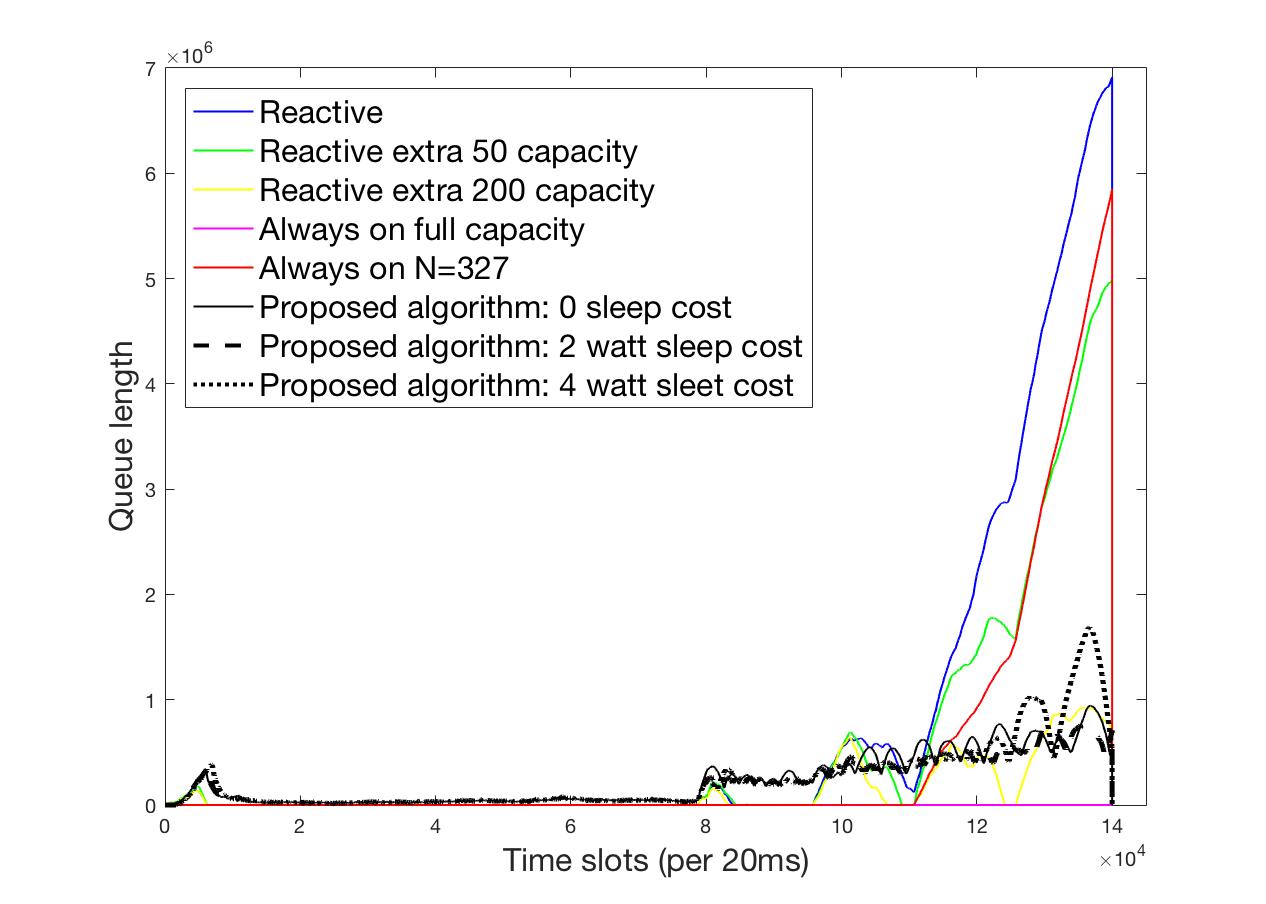} % requires the graphicx package
   \caption{Instantaneous queue length for different algorithms.}
   \label{fig:queue-length-S}
\end{figure}

\setcounter{figure}{15}
\begin{figure*}[ht]
\centering
 \begin{minipage}{5.6cm}
   \includegraphics[height=4cm] {average_cost}
 \end{minipage}
 \begin{minipage}{5.6cm}
   \includegraphics[height=4cm] {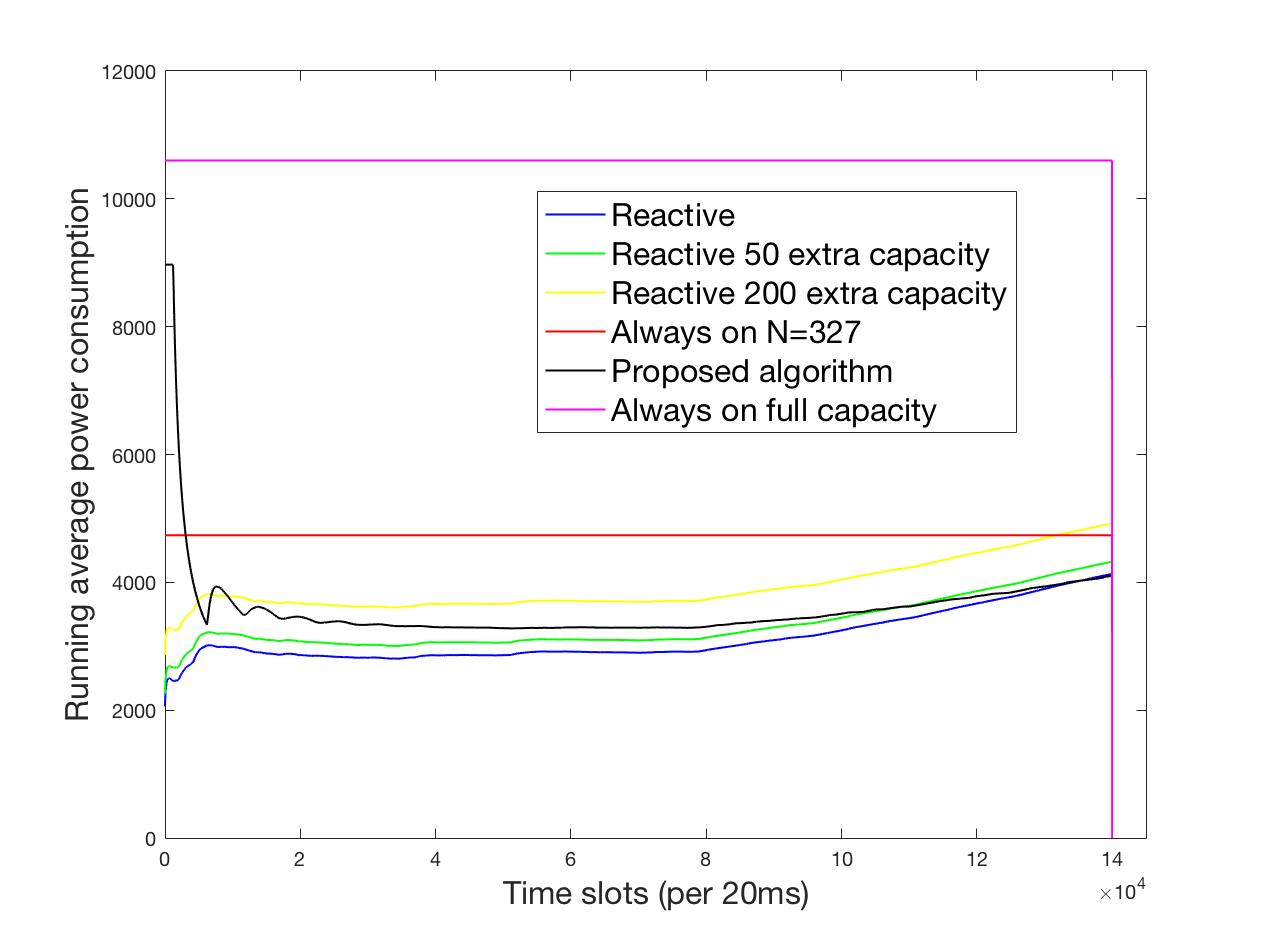}
 \end{minipage}
 \begin{minipage}{5.6cm}
   \includegraphics[height=4cm] {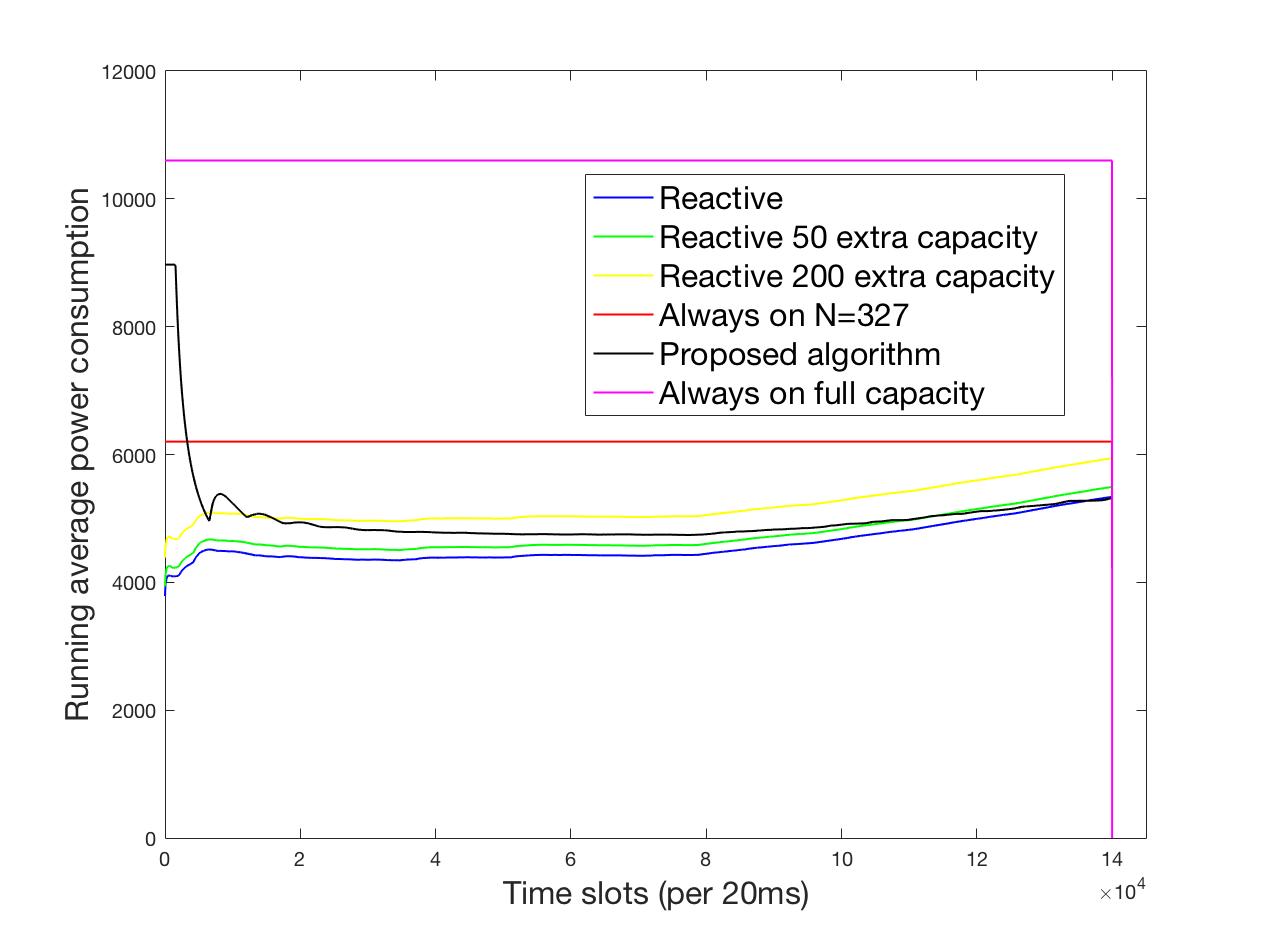}
 \end{minipage}
\caption{Running average power consumption for 0 W sleep cost(left), 2 W sleep cost(middle) and 4 W sleep cost(right)}\label{fig:sleep-mode}
\end{figure*}

\section{Conclusions}
This paper proposes an efficient distributed asynchronous control algorithm reducing the cost in a data center, where the front-end load balancer makes slot-wise routing requests to the shortest queue and each server makes frame-based service decision by only looking at its own request queue. Theoretically, this algorithm is shown to achieve the near optimal cost while stabilizing the request queues. Simulation experiments on a real data center traffic trace demonstrates that our algorithm outperforms several other algorithms in reducing the power consumption as well as achieving lower delays.

\section*{Appendix A--- Proof of Lemma \ref{compute_idle}}
We first show that the optimal decision on $I_n[f]$ is a pure decision. We have \eqref{appendix_A_interim} holds,
\begin{figure*}
\normalsize
\begin{align}
&D_n[f]=\frac{V\hat{W}_n(\alpha_n[f])m_{\alpha_n[f]}+Ve_n-Q_n(t_f^{(n)})\mu_n
        +\expect{\frac{B_0}{2}(I_n[f]+\tau_n[f]+1)^2+V\hat{g}(\alpha_n[f])I_n[f]\left|~Q_n(t_f^{(n)})\right.}}
        {\expect{I_n[f]+\tau_n[f]+1~\left|~Q_n(t_f^{(n)})\right.}}-\frac{B_0}{2}\nonumber\\
&=\frac{V\hat{W}_n(\alpha_n[f])m_{\alpha_n[f]}+Ve_n-Q_n(t_f^{(n)})\mu_n+\expect{\frac{B_0}{2}(I_n[f]+m_{\alpha_n}+1)^2
  +\frac{B_0}{2}\sigma_{\alpha_n[f]}^2+V\hat{g}(\alpha_n[f])I_n[f]~\left|~Q_n(t_f^{(n)})\right.}}
  {\expect{I_n[f]+m_{\alpha_n}+1~\left|~Q_n(t_f^{(n)})\right.}}-\frac{B_0}{2}\label{appendix_A_interim}
\end{align}
\end{figure*}
where the first equality follows from the definition $T_n[f]=I_n[f]+\tau_n[f]+1$ and the second equality follows from the fact that the setup time $\tau_n[f]=\hat{\tau}_n(\alpha_n[f])$ is independent of $I_n[f]$ with mean $m_{\alpha_n[f]}$ and variance $\sigma_{\alpha_n[f]}^2$.
For simplicity of notations, let
\begin{align*}
F(\alpha_n[f],I_n[f]) =& V\hat{W}_n(\alpha_n[f])m_{\alpha_n[f]}+Ve_n-Q_n(t_f^{(n)})\mu_n\\
        &+\frac{B_0}{2}(I_n[f]+m_{\alpha_n[f]}+1)^2+V\hat{g}(\alpha_n[f])I_n[f]\\
        &+\frac{B_0}{2}\sigma_{\alpha_n[f]}^2\\
G(\alpha_n[f],I_n[f]) =& I_n[f]+m_{\alpha_n[f]}+1,
\end{align*}
then
\[D_n[f]=\frac{\expect{F(\alpha_n[f],I_n[f])~|~Q_n(t_f^{(n)})}}{\expect{G(\alpha_n[f],I_n[f])~|~Q_n(t_f^{(n)})}}-\frac{B_0}{2}.\]
Meanwhile, given the queue length $Q_n(t_f^{(n)})$ at frame $f$, denote the benchmark solution over pure decisions as
\begin{equation}\label{det_solution}
m \triangleq \min_{I_n[f]\in\mathbb{N},~I_n[f]\in[1,I_{\max}],\alpha_n[f]\in\mathcal{L}_n}\frac{F(\alpha_n[f],I_n[f])}{G(\alpha_n[f],I_n[f])}.
\end{equation}
Then, for any randomized decision on $\alpha_n[f]$ and $I_n[f]$, its realization within frame $f$ satisfies the following
\[\frac{F(\alpha_n[f],I_n[f])}{G(\alpha_n[f],I_n[f])}\geq m,\]
which implies
\[F(\alpha_n[f],I_n[f])\geq m G(\alpha_n[f],I_n[f]).\]
Taking conditional expectation from both sides gives
\begin{align*}
&\expect{F(\alpha_n[f],I_n[f])~|~Q_n(t_f^{(n)})}\\
&\geq m\expect{G(\alpha_n[f],I_n[f])~|~Q_n(t_f^{(n)})}\\
&\Rightarrow~\frac{\expect{F(\alpha_n[f],I_n[f])~|~Q_n(t_f^{(n)})}}{\expect{G(\alpha_n[f],I_n[f])~|~Q_n(t_f^{(n)})}}\geq m.
\end{align*}
Thus, it is enough to consider pure decisions only, which boils down to computing \eqref{det_solution}. This proves the lemma.

\section*{Appendix B--- Proof of Lemma \ref{bounded_supMG}}
This section is dedicated to prove that $\expect{X_n[f]^2}$ is bounded. First of all, since the idle option set $\mathcal{L}_n$ is finite, denote
\begin{align*}
W_{\max}=\max_{\alpha_n\in\mathcal{L}_n}W_n(\alpha_n)\\
g_{\max}=\max_{\alpha_n\in\mathcal{L}_n}g_n(\alpha_n)
\end{align*}
It is obvious that $|W_n(t)-\overline{W}_n^*|\leq W_{\max}$, $|g_n(t)-\overline{G}_n^*|\leq g_{\max}$, $|e_nH_n(t)-\overline{E}_n^*|\leq e_n$, and $|\mu_nH_n(t)-\overline{\mu}^*|\leq\mu_n$. Combining with the boundedness of queues in lemma \ref{bounded_delay}, it follows
\begin{align*}
|X_n[f]|\leq&\sum_{t=t_f^{(n)}}^{t=t_{f+1}^{(n)}-1}\left(V\left(W_{\max}+e_n+g_{\max}\right)
       +\left(Vc_{\max}\right.\right.\\
       &\left.\left.+R_{\max}\right)\mu_n+\left(t-t_f^{(n)}\right)B_0+\Psi_n\right)\\
      \leq&\left(V(W_{\max}+e_n+g_{\max})+(Vc_{\max}+R_{\max})\mu_n\right.\\
       &\left.+\Psi_n\right)T_n[f]+\frac{T_n[f](T_n[f]-1)B_0}{2}
\end{align*}
Let $B_1\triangleq V(W_{\max}+e_n+g_{\max})+(Vc_{\max}+R_{\max})\mu_n+\Psi_n+B_0/2$, it follows
\[|X_n[f]|\leq B_1T_n[f]+\frac{B_0}{2}T_n[f]^2.\]
Thus,
\[\expect{X_n[f]^2}\leq B_1^2\expect{T_n[f]^2}+B_1B_0\expect{T_n[f]^3}+\frac{B_0^2}{4}\expect{T_n[f]^4}.\]
Notice that $T_n[f]\leq I_n[f]+\tau_n[f]+1$ by \eqref{frame_length}, where $I_n[f]$ is upper bonded by $I_{\max}$ and $\tau_n[f]$ has first four moments bounded by assumption \ref{bounded_moment_assumption}. Thus, $\expect{X_n[f]^2}$ is bounded by a fixed constant.

\section*{Appendix C--- Proof of Lemma \ref{true_time_average}}
\begin{proof}
Let's first abbreviate the notation by defining
\begin{align*}
Y(t)=&V(W_n(t)+e_nH_n(t)+g_n(t))-Q_n(t_f^{(n)})
(\mu_nH_n(t)-\overline{\mu}_n^*)\\
&+\left(t-t_f^{(n)}\right)B_0.
\end{align*}
For any $T\in[t_F^{(n)},~t_F^{(n+1)})$, we can bound the partial sums from above by the following
\[\sum_{t=0}^{T-1}Y(t)\leq\sum_{t=0}^{t_F^{(n)}-1}Y(t)+B_2T_n[F]+\frac{B_0}{2}T_n[F]^2,\]
where $B_0=\frac{1}{2}(R_{\max}+\mu_{\max})\mu_{\max}$ is defined in \eqref{server_decision}, and $B_2\triangleq VW_n+V\mu_ne_n+(Vc_{\max}+R_{\max})\mu_n+B_0/2$. Thus,
\begin{align*}
\frac{1}{T}\sum_{t=0}^{T-1}Y(t)&\leq\frac{1}{T}\sum_{t=0}^{t_F^{(n)}-1}Y(t)+\frac{1}{T}\left(B_2T_n[F]+\frac{B_0}{2}T_n[F]^2\right)\\
&\leq\max\{a[F],~b[F]\},
\end{align*}
where
\begin{align*}
a[F]\triangleq&\frac{1}{t_F^{(n)}}\sum_{t=0}^{t_F^{(n)}-1}Y(t)+\frac{1}{t_F^{(n)}}\left(B_2T_n[F]+\frac{B_0}{2}T_n[F]^2\right),\\
b[F]\triangleq&\frac{1}{t_{F+1}^{(n)}}\sum_{t=0}^{t_F^{(n)}-1}Y(t)+\frac{1}{t_{F+1}^{(n)}}\left(B_2T_n[F]+\frac{B_0}{2}T_n[F]^2\right).
\end{align*}
Thus, this implies that
\begin{align*}
\limsup_{T\rightarrow\infty}\frac{1}{T}\sum_{t=0}^{T-1}Y(t)
\leq&\limsup_{F\rightarrow\infty}\max\{a[F],~b[F]\}\\
=&\max\left\{\limsup_{F\rightarrow\infty}a[F],~\limsup_{F\rightarrow\infty}b[F]\right\}.
\end{align*}
We then try to work out an upper bound for $\limsup_{F\rightarrow\infty}a[F]$ and $\limsup_{F\rightarrow\infty}b[F]$ respectively.

\begin{enumerate}
  \item Bound for $\limsup_{F\rightarrow\infty}a[F]$:
\begin{align*}
\limsup_{F\rightarrow\infty}a[F]\leq&\limsup_{F\rightarrow\infty}\frac{1}{t_F^{(n)}}\sum_{t=0}^{t_F^{(n)}-1}Y(t)\\
                                    &+\limsup_{F\rightarrow\infty}\frac{1}{t_F^{(n)}}\left(B_2T_n[F]+\frac{B_0}{2}T_n[F]^2\right)\\
                                \leq& V(\overline{W}_n^*+\overline{E}_n^*+\overline{G}_n^*)+\Psi_n\\
                                    &+\limsup_{F\rightarrow\infty}\frac{1}{t_F^{(n)}}\left(B_2T_n[F]+\frac{B_0}{2}T_n[F]^2\right).
\end{align*}
where the second inequality follows from corollary \ref{corollary_ratio_time_average}.
It remains to show that
\begin{equation}\label{interim_a[F]}
\limsup_{F\rightarrow\infty}\frac{1}{t_F^{(n)}}\left(B_2T_n[F]+\frac{B_0}{2}T_n[F]^2\right)\leq0.
\end{equation}
Since $t_F^{(n)}\geq F$, it is enough to show that
\begin{align}
&\limsup_{F\rightarrow\infty}\frac{T_n[F]}{F}=0,   \label{a_1}\\
&\limsup_{F\rightarrow\infty}\frac{T_n[F]^2}{F}=0. \label{a_2}
\end{align}
We prove \eqref{a_2}, and \eqref{a_1} is similar. Since each $T_n[F]=I_n[F]+\tau_n[F]+1$, where $I_n[F]\leq I_{\max}$ and $\tau_n[F]$ has bounded first four moments, the first four moments of $T_n[F]$ must also be bounded and there exists a constant $C>0$ such that
\[\expect{T_n[F]^4}\leq C.\]
For any $\epsilon>0$, define a sequence of events
\[A_F^\epsilon\triangleq\left\{T_n[F]^2>\epsilon F\right\}.\]
According to Markov inequality,
\[Pr\left[A_F^\epsilon\right]\leq\frac{\expect{T_n[F]^4}}{\epsilon^2F^2}\leq\frac{C}{\epsilon^2F^2}.\]
Thus,
\[\sum_{F=1}^\infty Pr\left[A_F^\epsilon\right]\leq\frac{C}{\epsilon^2}\sum_{F=1}^\infty\frac{1}{F^2}\leq\frac{2C}{\epsilon^2}<\infty. \]
By Borel-Cantelli lemma (lemma 1.6.1 in \cite{durrett_probability}),
\[Pr\left[A_F^\epsilon~\textrm{occurs infinitely often}\right]=0,\]
which implies
\[Pr\left[\limsup_{F\rightarrow\infty}\frac{T_n[F]^2}{F}>\epsilon\right]=0.\]
Since $\epsilon$ is arbitrary, this implies \eqref{a_2}. Similarly, \eqref{a_1} can be proved. Thus, \eqref{interim_a[F]} holds and
\[\limsup_{F\rightarrow\infty}a[F]\leq V(\overline{W}_n^*+\overline{E}_n^*+\overline{G}_n^*)+\Psi_n.\]

\item Bound for $\limsup_{F\rightarrow\infty}b[F]$:
\begin{align*}
\limsup_{F\rightarrow\infty}b[F]\leq&\limsup_{F\rightarrow\infty}\frac{1}{t_F^{(n)}}\sum_{t=0}^{t_F^{(n)}-1}Y(t)\cdot\frac{t_F^{(n)}}{t_{F+1}^{(n)}}\\
                                    &+\limsup_{F\rightarrow\infty}\frac{1}{t_{F+1}^{(n)}}\left(B_2T_n[F]+\frac{B_0}{2}T_n[F]^2\right).\\
                                \leq&\limsup_{F\rightarrow\infty}\left(\frac{1}{t_F^{(n)}}\sum_{t=0}^{t_F^{(n)}-1}Y(t)\right)\\
                                    &\cdot\limsup_{F\rightarrow\infty}\frac{t_F^{(n)}}{t_{F+1}^{(n)}}\\
                                \leq&\left(V\left(\overline{W}_n^*+\overline{E}_n^*+\overline{G}_n^*\right)+\Psi_n\right)\\
                                    &\cdot\limsup_{F\rightarrow\infty}\frac{t_F^{(n)}}{t_{F+1}^{(n)}}\\
                                \leq&V\left(\overline{W}_n^*+\overline{E}_n^*+\overline{G}_n^*\right)+\Psi_n,
\end{align*}
where the second inequality follows from \eqref{interim_a[F]}, the third inequality follows from corollary \ref{corollary_ratio_time_average} and the last inequality follows from the fact that $V\left(\overline{W}_n^*+\overline{E}_n^*+\overline{G}_n^*\right)+\Psi_n>0$.
\end{enumerate}
Above all, we proved the lemma.
\end{proof}

\section*{Appendix D--- Proof of Theorem \ref{theorem_near_optimal_perform}}
\begin{proof}
Define the drift-plus-penalty(DPP) expression $P(t)$ as follows
\begin{align*}
P(t)=&V\left(c(t)d(t)+\sum_{n=1}^N\left(W_n(t)+e_nH_n(t)+g_n(t)\right)\right)\\
&+\frac{1}{2}\sum_{n=1}^N\left(Q_n(t+1)^2-Q_n(t)^2\right).
\end{align*}
By simple algebra using the queue updating rule \eqref{queue_update}, we can work out the upper bound for $P(t)$ as follows,
\begin{align*}
P(t)\leq& \frac{1}{2}\sum_{n=1}^N(R_n(t)+\mu_n)^2+V\left(c(t)d(t)+\right.\\
        &  \left.\sum_{n=1}^N\left(W_n(t)+e_nH_n(t)+g_n(t)\right)\right)\\
        &  +\sum_{n=1}^NQ_n(t)(R_n(t)-\mu_nH_n(t))\\
    \leq& B_3+V\left(c(t)d(t)+\sum_{n=1}^N\left(W_n(t)+e_nH_n(t)+g_n(t)\right)\right)\\
        & +\sum_{n=1}^NQ_n(t)(R_n(t)-\mu_nH_n(t))\\
\end{align*}
\begin{align*}
    \leq& B_3+Vc(t)d(t)+\sum_{n=1}^NQ_n(t)\left(R_n(t)-\overline{R}^*_n\right)\\
          &+V\sum_{n=1}^N\left(W_n(t)+e_nH_n(t)+g_n(t)\right)\\
          &+\sum_{n=1}^NQ_n(t)\left(\overline{\mu}^*_n-\mu_nH_n(t)\right)
\end{align*}
where $B_3=\frac{1}{2}\sum_{n=1}^N(R_{\max}+\mu_n)^2$, the last inequality follows from adding $\sum_{n=1}^NQ_n(t)\overline{\mu}^*_n$ and subtracting $\sum_{n=1}^NQ_n(t)\overline{R}^*$ with the fact that the best randomized stationary algorithm should also satisfy the constraint \eqref{obj_4}, i.e. $\overline{\mu}^*\geq\overline{R}_n^*$.

Now we take the partial average of $P(t)$ from 0 to $T-1$ and take $\limsup_{T\rightarrow\infty}$,
\begin{align}
&\limsup_{T\rightarrow\infty}\frac{1}{T}\sum_{n=1}^{T-1}P(t)\nonumber \\
\leq& B_3 +
\limsup_{T\rightarrow\infty}\frac{1}{T}\sum_{t=0}^{T-1}\left(Vc(t)d(t)+\sum_{n=1}^NQ_n(t)\left(R_n(t)-\overline{R}^*_n\right)\right)\nonumber\\
&+\sum_{n=1}^N\limsup_{T\rightarrow\infty}\frac{1}{T}\sum_{t=0}^{T-1}\left(V\left(W_n(t)+e_nH_n(t)+g_n(t)\right)\right.\nonumber\\
&\left.+Q_n(t)\left(\overline{\mu}^*_n-\mu_nH_n(t)\right)\right).
\label{DPP_bound}
\end{align}
According to \eqref{prob_1_front_end},
\begin{equation}\label{DPP_sub_bound_1}
\limsup_{T\rightarrow\infty}\frac{1}{T}\sum_{t=0}^{T-1}\left(Vc(t)d(t)+\sum_{n=1}^NQ_n(t)\left(R_n(t)-\overline{R}^*_n\right)\right)
\leq V\overline{C}^*.
\end{equation}
On the other hand,
\begin{align}
&\limsup_{T\rightarrow\infty}\frac{1}{T}\sum_{t=0}^{T-1}\left(V\left(W_n(t)+e_nH_n(t)+g_n(t)\right)\right.\nonumber\\
      &\left.+Q_n(t)\left(\overline{\mu}^*_n-\mu_nH_n(t)\right)\right)\nonumber\\
\leq&\limsup_{T\rightarrow\infty}\frac{1}{T}\sum_{t=0}^{T-1}\left(V\left(W_n(t)+e_nH_n(t)+g_n(t)\right)\right.\nonumber\\
      &\left.+Q_n(t_f^{(n)})\left(\overline{\mu}^*_n-\mu_nH_n(t)\right)
       +(t-t_f^{(n)})B_0\right)\nonumber\\
\leq&V\left(\overline{W}_n^*+\overline{E}_n^*+\overline{G}_n^*\right)+\Psi_n,\label{DPP_sub_bound_2}
\end{align}
where $B_0=\frac{1}{2}(R_{\max}+\mu_{\max})\mu_{\max}$ as defined below \eqref{server_decision}, the first inequality follows from the fact that for any $t\in\left(t_f^{(n)},~t_{f+1}^{(n)}\right)$,
\begin{align*}
&Q_n(t)\left(\overline{\mu}^*_n-\mu_nH_n(t)\right)\nonumber\\
\leq& Q_n(t_f^{(n)})\left(\overline{\mu}^*_n-\mu_nH_n(t)\right)\\
      &+(Q_n(t)-Q_n(t_f^{(n)}))\left(\overline{\mu}^*_n-\mu_nH_n(t)\right)\\
\leq& Q_n(t_f^{(n)})\left(\overline{\mu}^*_n-\mu_nH_n(t)\right)\\
      &+\sum_{t=t_f^{(n)}}^{t_{f+1}^{(n)}-1}(R_n(t)-\mu_nH_n(t))\left(\overline{\mu}^*_n-\mu_nH_n(t)\right)\\
\leq& Q_n(t_f^{(n)})\left(\overline{\mu}^*_n-\mu_nH_n(t)\right)+(t-t_f^{(n)})B_0,
\end{align*}
and the second inequality follows from lemma \ref{true_time_average}. Substitute \eqref{DPP_sub_bound_1} and \eqref{DPP_sub_bound_2} into \eqref{DPP_bound} gives
\begin{align}\label{penultimate_step}
\limsup_{T\rightarrow\infty}\frac{1}{T}\sum_{t=0}^{T-1}P(t)\leq& V\left(\overline{C}^*+\sum_{n=1}^N\left(\overline{W}_n^*+\overline{E}_n^*+\overline{G}_n^*\right)\right)\nonumber\\
&+B_3+\sum_{n=1}^N\Psi_n.
\end{align}
Finally, notice that by telescoping sums,
\begin{align*}
&\limsup_{T\rightarrow\infty}\frac{1}{T}\sum_{t=0}^{T-1}P(t)\\
=&\limsup_{T\rightarrow\infty}\left(\frac{V}{T}\sum_{t=0}^{T-1}\left(c(t)d(t)
+\sum_{n=1}^N\left(W_n(t)+e_nH_n(t)\right.\right.\right.\\
&\left.\left.\left.+g_n(t)\right)\right)+\frac{1}{2}\sum_{n=1}^NQ_n(T)^2\right)\\
\geq&V\cdot\limsup_{T\rightarrow\infty}\frac{1}{T}\sum_{t=0}^{T-1}\left(c(t)d(t)
+\sum_{n=1}^N\left(W_n(t)+e_nH_n(t)\right.\right.\\
&\left.\left.+g_n(t)\right)\right)
\end{align*}
Substitute above inequality into \eqref{penultimate_step} and divide $V$ from both sides give the desired result.
\end{proof}

\bibliographystyle{unsrt}
\bibliography{bibliography}

\end{document}